\documentclass{article}
\usepackage[english]{babel}
\usepackage[a4paper, total={5.25in, 8in}]{geometry}
\usepackage{graphicx}

\usepackage{tikz-cd}
\usepackage[utf8]{inputenc}
\usepackage[pagewise]{lineno}
\usepackage{fancyhdr}  

\fancypagestyle{firstpage}{
    \fancyhf{}  
    \fancyfoot[L]{2020 MSC Classification: 70H06, 37J30, 58A15, 34A05.}
}
\usepackage{hyperref}
\hypersetup{
    colorlinks=true,
    linkcolor=magenta,
    filecolor=magenta,
    urlcolor=cyan,
    citecolor=magenta,
    }

\usepackage{amssymb, amsmath}
\usepackage{amsthm}
\usepackage{enumitem}

\theoremstyle{remark}

\theoremstyle{definition}
\newtheorem{theorem}{Theorem}[section]

\newtheorem{proposition}{Proposition}[section]

\newtheorem{definition}{Definition}[section]
\newtheorem{remark}{Remark}[section]
\newtheorem{example}{Example}[section]

\newenvironment{keywords}{\setlength{\parindent}{0pt}}{}

\newcommand{\cinf}{$\mathcal{C}^{\infty}$}
\newcommand{\contract}{\,\lrcorner\,}

\usepackage{todonotes}
\usepackage{xcolor}

\usepackage[most]{tcolorbox}

\tcbset{
    mynotestyle/.style={
        breakable,
        enhanced,
        before skip=12pt,
        after skip=12pt,
        left=10pt,
        right=10pt,
        arc=0mm,
        boxrule=1pt,
        parbox=false,
        attach boxed title to top left={xshift=3mm, yshift=-3mm, yshifttext=-1mm},
        boxed title style={sharp corners, size=small},
        fonttitle=\bfseries\sffamily,
    }
}

\newtcolorbox{longpan}{
    mynotestyle, 
    colback=orange!10, 
    colframe=orange!60, 
    title=Pan, 
    colbacktitle=orange!60
}

\newtcolorbox{longzhao}{
    mynotestyle, 
    colback=blue!10, 
    colframe=blue!60, 
    title=Zhao, 
    colbacktitle=blue!60
}

\newtcolorbox{longsardon}{
    mynotestyle, 
    colback=yellow!10, 
    colframe=yellow!60, 
    title=Sardon, 
    colbacktitle=yellow!60,
    coltitle=black 
}

\usepackage{marginnote}
\usepackage{tikz}

\begin{document}

\title{Exact integration of Hamiltonian dynamics via Jacobi and Poisson \texorpdfstring{\cinf}{cinf}-structures}

\author{
Antonio J. Pan-Collantes\\
Department of Mathematics, Universidad de C\'{a}diz, Puerto Real, Spain\\
\texttt{antonio.pan@uca.es}\\
\and
Cristina Sardon\\
Departamento de Matemática Aplicada,\\
Universidad Politécnica de Madrid,\\
Escuela de Edificación, Av. Juan de Herrera 6, 28040 Madrid, Spain\\
\texttt{mariacristina.sardon@upm.es}
\and
Xuefeng Zhao\\
College of Mathematics, Jilin University\\
Changchun 130012, P. R. China\\
\texttt{zhaoxuef@jlu.edu.cn}
}
\maketitle

\thispagestyle{firstpage}  

\begin{abstract}
We develop a geometric framework for the exact integration of Hamiltonian systems based on triangular closure relations among a finite family of functions. Unlike Liouville–Arnold integrability and its noncommutative generalizations, the functions involved in these relations need not be first integrals of the system. Instead, their Hamiltonian vector fields generate a $C^\infty$-structure on phase space that provides an algorithmic procedure for integrating the dynamics.

Within this framework, the equations of motion can be reduced to a finite sequence of completely integrable Pfaffian equations, yielding an explicit integration scheme even when a complete set of conserved quantities is unavailable. The resulting geometric structure is called a \emph{Poisson $C^\infty$-structure}.

We further extend the construction to Jacobi Hamiltonian systems, showing that the same mechanism applies naturally to important subclasses of Jacobi geometry, including Poisson, locally conformally symplectic, and contact manifolds. The method is illustrated on two systems of physical interest: the two-particle non-periodic Toda lattice and the multi-waterbag reduction of the Vlasov equation. We also discuss extensions of the theory to time-dependent Hamiltonian systems.

\end{abstract}

\begin{keywords}
\textbf{Keywords:} Hamiltonian systems, exact solvability, Poisson brackets,
Pfaffian equations, first integrals, \(C^\infty\)-structures, Jacobi manifolds, contact structures, locally conformally symplectic manifolds

\end{keywords}

\section{Introduction}

The classical theory of integrable Hamiltonian systems is rooted in the
Liouville--Arnold theorem \cite{Liouville,Arnold3}. A Hamiltonian system with $n$ degrees of freedom is
said to be completely integrable if it admits $n$ functionally independent first
integrals in involution. Under this assumption, the phase space is foliated by
invariant Lagrangian tori and the dynamics on each torus is quasi-periodic. This
result provides a powerful geometric description of integrability and reduces
the equations of motion, at least in principle, to quadratures. Naturally, other notions of integrability have also been studied, such as the Magri integrability \cite{Fuchssteiner,Magri,Olver} and noncommutative integrability \cite{Azuaje2,Carinena4,Grabowska,Kozlov,Mishchenko1978Generalized,Mishchenko1978Euler}. Further research in the direction of integrability has been extensively developed in numerous studies \cite{Carlet,Carlet2,Dubrovin,Lorenzoni2,Zhao2,Zhao3}.

A major extension of this framework is provided by the theory of noncommutative
integrability introduced by Mishchenko and Fomenko \cite{Mishchenko1978Generalized,Mishchenko1978Euler}. In this setting, one allows
a sufficiently large family of first integrals that need not commute, provided
their Poisson algebra satisfies an appropriate completeness condition. Although commutativity is
relaxed, the dynamics remains strongly constrained and is again organized by
invariant isotropic tori.

Despite their fundamental role, both Liouville--Arnold integrability and its
noncommutative generalizations are essentially \emph{structural} notions. They
guarantee the existence of conserved quantities restricting the dynamics, but
they do not automatically yield explicit solutions. In particular, integrability
does not imply \emph{exact solvability}, understood as the possibility of
constructing trajectories by a finite sequence of explicit local integrations. Even
for Liouville-integrable systems, the construction of action--angle variables \cite{Duistermaat1980,Nehorosev1972}
or the evaluation of the required quadratures may be analytically intractable.

The present work addresses this gap by focusing directly on exact solvability.
Our guiding principle is that explicit integration should not be tied
exclusively to the existence of first integrals. Instead, we show that the
dynamics of a Hamiltonian system can be organized and explicitly integrated by
means of an ordered set of functions, even when these functions are not
constants of motion.

A closely related and complementary line of research has been developed by Kresic-Juric, Muriel and Ruiz \cite{kresicjuric2025}, who construct canonical solvable structures for completely integrable Hamiltonian systems. Starting from $n$ first integrals in involution, they build $2n$ Hamiltonian vector fields that form a solvable structure and show that the associated Pfaffian forms recover the classical action--angle variables, providing a novel and illuminating geometric interpretation of the Arnold--Liouville theorem.

The present work is motivated by a different but related question: whether explicit integration can be achieved even when a complete set of conserved quantities is not available. The triangular closure condition defining a Poisson $C^\infty$-structure allows functions that evolve nontrivially along the Hamiltonian flow to be systematically incorporated into the integration scheme, and the resulting procedure yields integration via completely integrable Pfaffian equations. Together, the two approaches offer complementary perspectives on Hamiltonian integrability: solvable structures exploit the full power of conservation laws to achieve integration by quadratures, while Poisson $C^\infty$-structures trade the quadrature property for applicability to a broader class of systems, including those that are not Liouville integrable and dynamics on Jacobi manifolds such as contact and locally conformally symplectic geometries.

More precisely, we introduce a structure based on the existence of an ordered
family of $2n-2$ functionally independent functions
\[
\mathcal{F} = ( f_1, \dots, f_{2n-2} ),
\]
defined on a $2n$-dimensional symplectic manifold together with a Hamiltonian
function $H\in\mathcal C^\infty(M)$. For notational convenience, we will set
\[
 f_0:=H.
\]
The defining feature of the family is a triangular closure condition with
respect to the Poisson bracket: whenever $j>i$ (allowing $i=0$), the bracket
$\{f_j,f_i\}$ depends only on the functions $f_0,\dots,f_j$. This condition
induces a natural hierarchy in the Poisson algebra of functions and generalizes
both commutative and noncommutative integrability without requiring the
functions involved to be conserved.

Unless explicitly stated otherwise, functional independence is understood on an
open dense subset $U\subseteq M$ (equivalently, $df_1\wedge\cdots\wedge df_{2n-2}\neq 0$
on $U$). All constructions and integration statements are meant on $U$ and are,
in particular, local near any regular point.

We refer to such a family as a \emph{Poisson \cinf-structure}. The central
result of this paper shows that the existence of a Poisson \cinf-structure
induces a \cinf-structure, in the sense of \cite{pancinf-sym}, for the
distribution generated by the relevant Hamiltonian vector field. As a consequence, the
equations of motion can be integrated, at least locally, by means of a sequence of
completely integrable Pfaffian equations. In this way, the Poisson
\cinf-structure provides not only an integrability criterion, but also a
constructive integration algorithm.

The framework developed here applies naturally to Hamiltonian systems on symplectic manifolds, where the Poisson structure is nondegenerate. However, many physically and geometrically relevant Hamiltonian systems arise in more general settings where the underlying structure is not symplectic. In particular, Jacobi manifolds provide a natural generalization of Poisson geometry and include important subclasses such as Poisson, locally conformally symplectic, and contact manifolds.

Motivated by this observation, we extend the notion of Poisson $C^\infty$-structures to the broader context of Jacobi Hamiltonian systems. In this setting, the presence of the Reeb vector field introduces additional geometric features, but the triangular closure mechanism still yields a $C^\infty$-structure for the Hamiltonian distribution and leads to the same Pfaffian integration procedure.

The paper is organized as follows. Section~\ref{sec:preliminaries} reviews the necessary background on symplectic geometry, Hamiltonian systems, and \cinf-structures for distributions. Section~\ref{poissoncinf} introduces Poisson \cinf-structures, establishes the main integration theorem, and develops the integration algorithm; it also treats time-dependent Hamiltonian systems and illustrates the construction on the two-particle non-periodic Toda lattice and the multi-waterbag reduction of the Vlasov equation. Section~\ref{sec:jacobicinf} extends the framework to Jacobi Hamiltonian systems, covering the Poisson, locally conformally symplectic, and contact subcases. Section~5 collects conclusions and directions for future work.

\section{Preliminaries}\label{sec:preliminaries}
In this section, we briefly review the geometric framework of Hamiltonian mechanics and the Pfaffian integration method based on \cinf-structures. We also introduce the necessary notation and conventions used throughout the paper.

\subsection{Symplectic geometry and Hamiltonian systems}
Let $(M, \omega)$ be a symplectic manifold of dimension $2n$ \cite{Arnold3,Arnold2,Zhao1}. The non-degeneracy of the closed 2-form $\omega$ induces a vector bundle isomorphism between $TM$ and $T^*M$ via the contraction map $v \mapsto v \contract \omega$. Consequently, for any function $f \in \mathcal{C}^{\infty}(M)$, the Hamiltonian vector field $X_f$ is uniquely defined by $X_f \contract \omega = df$, and given functions $f_1, \dots, f_k$, their associated Hamiltonian vector fields $X_{f_1}, \dots, X_{f_k}$ are pointwise linearly independent on the open set where $f_1,\dots,f_k$ are functionally independent.

The symplectic form naturally endows $\mathcal{C}^{\infty}(M)$ with a Lie algebra structure via the Poisson bracket $\{f, g\} = \omega(X_f, X_g) = X_g(f)$. The mapping $f \mapsto X_f$ constitutes a Lie algebra homomorphism, satisfying $[X_f, X_g] = X_{\{f, g\}}$, which links the algebraic structure of the observables to the geometry of the flow.

Recall that a symplectic manifold comes equipped with a natural volume form, the Liouville volume form, given by:
\begin{equation}\label{liouvilleform}
    \Omega = \frac{\omega^n}{n!},
\end{equation}
where $\omega^n$ denotes the $n$-th exterior power of $\omega$.

In the context of symplectic geometry, the Pfaffian \cite{Bourbaki,Cayley,Vein} provides a convenient way to express the contraction of vectors with the symplectic volume form. Specifically, for any set of $2n$ vector fields $Y_1, \dots, Y_{2n}$, we have:
\begin{equation}\label{eq:pfaffianvol}
\Omega(Y_1,\dots,Y_{2n}) = \operatorname{Pf}(\mathbf{W}), 
\end{equation}
where $\mathbf{W}$ is the $2n \times 2n$ skew-symmetric matrix with entries $w_{ab} = \omega(Y_a, Y_b)$. The \emph{Pfaffian} appearing in this formula is a polynomial associated with any even-dimensional skew-symmetric matrix $A = (a_{ij})$. It satisfies $(\operatorname{Pf}(A))^2 = \det(A)$ and is defined by
\[
\operatorname{Pf}(A) := \frac{1}{2^m m!} \sum_{\sigma \in S_{2m}} \mathrm{sgn}(\sigma) \prod_{k=1}^m a_{\sigma(2k-1),\,\sigma(2k)}.
\]
For odd-dimensional matrices, the Pfaffian is defined to be zero. It is a well-known fact that the Pfaffian also admits the following expansion for any fixed index $j$:
\begin{equation}\label{pfaffianexpansion}
    \operatorname{Pf}(A) = \sum_{\substack{i=1 \\ i \neq j}}^{2m} (-1)^{i+j+1} \, a_{ij} \, \operatorname{Pf}\big(A_{\widehat{i}\,\widehat{j}}\big),
\end{equation}
where $A_{\widehat{i}\,\widehat{j}}$ denotes the skew-symmetric matrix obtained from $A$ by deleting the $i$-th and $j$-th rows and columns \cite{Gantmacher}.

\subsection{\texorpdfstring{\cinf}{cinf}-structures of distributions}\label{prelCinf-structures}

We recall the notions of \cinf-symmetries and \cinf-structures for distributions developed in \cite{pancinf-sym,pancinf-struct,pan23integration}. These concepts generalize classical symmetries and solvable structures.

A \emph{distribution} $\mathcal{D}$ on a manifold $M$ (of dimension $m$) is a $\mathcal{C}^{\infty}(M)$-submodule of the module of vector fields $\mathfrak{X}(M)$. We assume the distribution has constant rank, meaning the dimension of the subspaces $\mathcal{D}_p := \{X(p) \mid X \in \mathcal{D}\} \subset T_pM$ is constant.
A submanifold $N\subset M$ is called an \emph{integral manifold} of $\mathcal{D}$
if $T_pN=\mathcal{D}_p$ for all $p\in N$.
\begin{definition}
Let $\mathcal{D}$ be a distribution on a manifold $M$. A vector field $X \in \mathfrak{X}(M)$ is called a \cinf-symmetry of $\mathcal{D}$ if $X\notin \mathcal{D}$ and
\begin{equation}
    [X, \mathcal{D}] \subset \mathcal{D} \oplus \langle X \rangle,
\end{equation}
where $\langle ... \rangle$ denotes the $\mathcal{C}^{\infty}(M)$-module generated by the vector fields inside the brackets, and $\oplus$ denotes the direct sum of modules.

In other words, for any vector field $Y \in \mathcal{D}$, the Lie bracket $[X, Y]$ can be expressed as a linear combination of vector fields in $\mathcal{D}$ and $X$ itself, with coefficients in $\mathcal{C}^{\infty}(M)$.
\end{definition}

\begin{definition}\label{def1}
Let $\mathcal{D}$ be a distribution of rank $k$ on an $m$-dimensional manifold $M$. An ordered set of vector fields $(X_1, X_2, \dots, X_{m-k})$ is called a \cinf-structure for $\mathcal{D}$ if the following conditions hold:
\begin{enumerate}
    \item $X_1$ is a \cinf-symmetry of $\mathcal{D}$.
    \item For each $j > 1$, $X_j$ is a \cinf-symmetry of the distribution $\mathcal{D}_{j-1}$ defined by
    \begin{equation}
        \mathcal{D}_{j-1} = \mathcal{D} \oplus \langle X_1, \dots, X_{j-1} \rangle.
    \end{equation}
\end{enumerate}
\end{definition}

\begin{remark}\label{rem:last}
Observe that the final vector field of a \cinf-structure may be chosen to be any vector field that is pointwise linearly independent of the preceding fields in the structure and the generators of the distribution to be integrated.
\end{remark}

\begin{remark}
    Given a distribution $\mathcal{D} = \langle Y_1, \dots, Y_k \rangle$, a \cinf-structure for $\mathcal{D}$ can be alternatively characterized as an ordered set of vector fields $(X_1, X_2, \dots, X_{m-k})$ such that the vector fields $X_1, \dots, X_{m-k}, Y_1, \dots, Y_k$ are pointwise linearly independent in an open subset and the distributions $\mathcal{D}_j$ are involutive, $1 \leq j \leq m-k$. 
\end{remark}

The fundamental result connecting these structures to integrability is the following theorem:

\begin{theorem}\cite[Theorem 3.5]{pancinf-sym}\label{TZ}
Let $\mathcal{D}$ be a distribution of rank $k$ on an $m$-dimensional manifold $M$. If $\mathcal{D}$ admits a \cinf-structure $(X_1, \dots, X_{m-k})$, then the integral manifolds of $\mathcal{D}$ can be obtained by solving a sequence of $m-k$ completely integrable Pfaffian equations. 
\end{theorem}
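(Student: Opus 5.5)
The plan is to proceed by induction on the length $m-k$ of the \cinf-structure, building the integral manifolds of $\mathcal{D}$ through the chain of distributions
\[
\mathcal{D}\subset\mathcal{D}_1\subset\cdots\subset\mathcal{D}_{m-k}=\mathfrak{X}(M).
\]
The approach uses the dual Pfaffian description. On the open set where $X_1,\dots,X_{m-k},Y_1,\dots,Y_k$ are pointwise independent, we work with the coframe they determine. Take the $1$-forms $\omega_1,\dots,\omega_{m-k}$ that annihilate $\mathcal{D}$ and satisfy $\omega_i(X_j)=\delta_{ij}$. Concretely, they are obtained by contracting the volume form with all fields but one and normalizing: $\omega_i=\frac{X_{m-k}\contract\cdots\widehat{X_i}\cdots\contract X_1\contract Y_k\contract\cdots\contract Y_1\contract\Omega}{\Omega(\ldots)}$, up to sign and ordering.

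In the first step I would show that $\mathcal{D}_{j}$ is involutive for every $j$. For $j=m-k$ this is trivial. Going downward, the \cinf-symmetry condition $[X_j,\mathcal{D}_{j-1}]\subset\mathcal{D}_{j-1}\oplus\langle X_j\rangle$ gives closure of $\mathcal{D}_j$ under brackets with $X_j$. One also needs $\mathcal{D}_{j-1}$ to be involutive, and this follows by induction from the bottom. The base case is $\mathcal{D}_0=\mathcal{D}$, which the statement implicitly takes to be involutive, since it is the distribution whose integral manifolds we seek (a rank-one distribution spanned by a Hamiltonian vector field being the case of interest). In terms of forms, involutivity of $\mathcal{D}_j$ means that the ideal generated by $\omega_{j+1},\dots,\omega_{m-k}$ is differentially closed.

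In the second step we integrate from the top. The last form $\omega_{m-k}$ annihilates the involutive corank-one distribution $\mathcal{D}_{m-k-1}$, so by Frobenius $\omega_{m-k}\wedge d\omega_{m-k}=0$. The Pfaffian equation $\omega_{m-k}=0$ is therefore completely integrable, and we solve it locally to get $I_{m-k}=c_{m-k}$. Next we restrict to a leaf $\{I_{m-k}=c_{m-k}\}$, which is an integral manifold of $\mathcal{D}_{m-k-1}$. On it, $\omega_{m-k-1}$ restricts to a form annihilating $\mathcal{D}_{m-k-2}$, and that distribution is tangent to the leaf and of corank one there. Frobenius again shows this restricted equation is completely integrable, and it yields $I_{m-k-1}$, possibly depending on $c_{m-k}$. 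Iterating $m-k$ times produces functions $I_1,\dots,I_{m-k}$ whose common level sets are the integral manifolds of $\mathcal{D}$. This gives exactly $m-k$ completely integrable Pfaffian equations.

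The main obstacle is the restriction step. We must check that the restricted forms are nonvanishing and define corank-one involutive distributions on each leaf. This follows from the duality $\omega_i(X_j)=\delta_{ij}$ together with $X_i$ being tangent to the leaves of $\mathcal{D}_i$. We must also check that the choice of leaf, i.e.\ the dependence on the constants $c_i$, is smooth, which is handled by working in Frobenius flat charts.
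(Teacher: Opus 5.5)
Your proposal is correct and follows essentially the argument behind the cited result, as sketched in Section~\ref{prelCinf-structures}. That argument uses the forms $\omega_i$ obtained by contracting $\Omega$ (your normalization by $\omega_i(X_i)$ only rescales them by a common nonvanishing factor), the condition $d\omega_i\equiv 0 \bmod \langle\omega_i,\dots,\omega_{m-k}\rangle$ (which encodes involutivity of the flag $\mathcal{D}_{i-1}$), and successive integration starting from $\omega_{m-k}$ on the level sets of the integrals already found. You are also right that involutivity of $\mathcal{D}$ itself must be assumed: it is automatic for the rank-one Hamiltonian distributions used in the paper, but in general the $\mathcal{C}^{\infty}$-structure conditions impose nothing on $[\mathcal{D},\mathcal{D}]$.
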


The proof of this theorem provides a constructive procedure to perform the local integration of the distribution
$\mathcal{D}$. Explicitly, if $\Omega$ is any volume form on $M$ and $(X_1, \ldots, X_{m-k})$ is a \cinf-structure for $\mathcal{D}$, we obtain $m-k$ linearly independent 1-forms $\omega_i$ ($i=1,\ldots,m-k$) by
\begin{equation}\label{formscontract}
    \omega_i = X_{m-k}\contract \cdots \widehat{X_i} \cdots \contract X_{1} \contract Y_k \contract\cdots \contract Y_1 \contract \Omega,
\end{equation}
where $Y_1,\ldots,Y_k$ are vector fields generating $\mathcal{D}$, and the hat denotes omission of $X_i$. These 1-forms satisfy:
\begin{equation}\label{cinfcondforms}
    d\omega_i=0 \text{ mod } \langle\omega_i,\omega_{i+1},\ldots,\omega_{m-k}\rangle.
\end{equation}

Once explicit formulas for the Pfaffian 1-forms $\omega_i$ are available, the integration procedure consists of successively solving the associated Pfaffian equations, starting with the last 1-form in the sequence:
$$
\omega_{m-k}\equiv 0.
$$
The solution lets us reduce to a lower-dimensional level set, and the process is continued recursively: on each level set of previously integrated equations, the next equation is solved. This finally yields a parametrization of the integral manifolds of the original distribution.  The interested reader can find the details in \cite{pancinf-sym,pancinf-struct,pan23integration}.

\section{Poisson \texorpdfstring{\cinf}{cinf}-structures and integration}\label{poissoncinf}

In this section we introduce the main new structure of the paper and show how it
leads to a constructive integration procedure for Hamiltonian systems. Unlike
classical integrability theories, the approach developed here does not rely on
the knowledge of families of first integrals, but instead on an ordered set of functions with a closure property with respect to the Poisson bracket.

Consider a Hamiltonian system $(M,H)$ consisting of a symplectic manifold $(M,\omega)$ of dimension $2n$, and a distinguished function $H\in \mathcal{C}^\infty(M)$. The associated Hamiltonian vector field $X_H$ generates the distribution
\[
\mathcal{D}_H = \langle X_H \rangle.
\]
Integrating the Hamiltonian dynamical system amounts to determining
the integral manifolds (curves) of $\mathcal{D}_H$.

Within this setting, we introduce the following concept.

\begin{definition}\label{def:poisson-cinf}  
A \emph{Poisson \cinf-structure} for the Hamiltonian system $(M,H)$ is an
ordered family of $2n-2$ functions
\[
\mathcal{F} = (f_1,\dots,f_{2n-2}), \quad f_i \in\mathcal{C}^\infty(M), \quad 1 \leq i \leq 2n-2,
\]
such that, taking $f_0:=H$, the set $\{f_0,f_1,\dots,f_{2n-2}\}$ is functionally independent, 
and for $0\le i<j\le 2n-2$, the Poisson bracket satisfies
\begin{equation}\label{condicion}
    \{f_j,f_i\} = F_{ji}(f_0,f_1,\dots,f_j),
\end{equation}
for some smooth functions $F_{ji}$.
\end{definition}

The knowledge of a Poisson \cinf-structure plays a significant role in the integration of the Hamiltonian system, as stated in the following theorem.

\begin{theorem}\label{thm:main}
Consider a Hamiltonian system $(M,H)$ on a $2n$-dimensional symplectic manifold $(M,\omega)$. Let $\mathcal{F}=(f_1,\dots,f_{2n-2})$ be a Poisson \cinf-structure for $(M,H)$. Then the Hamiltonian distribution
\[
\mathcal{D}_H=\langle X_H\rangle
\]
admits the \cinf-structure 
\[
(X_{f_1},X_{f_2},\dots,X_{f_{2n-2}},R),
\]
where $R$ is an arbitrary vector field, such that $R\notin \langle X_H, X_{f_1},\dots,X_{f_{2n-2}}\rangle$.
\end{theorem}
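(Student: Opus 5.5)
We verify Definition~\ref{def1} directly. First, the fields $X_H, X_{f_1},\dots,X_{f_{2n-2}}$ are pointwise linearly independent on the open dense set $U$ where $f_0,\dots,f_{2n-2}$ are functionally independent, because $v\mapsto v\contract\omega$ is an isomorphism and $X_{f_i}\contract\omega=df_i$. Set $f_0=H$ and, for $0\le j\le 2n-2$, define
\[
\mathcal{D}_j=\langle X_{f_0},X_{f_1},\dots,X_{f_j}\rangle ,
\]
so that $\mathcal{D}_0=\mathcal{D}_H$. Then $\mathcal{D}_{2n-2}$ has rank $2n-1$ on $U$. Since $R\notin\mathcal{D}_{2n-2}$, the total of $2n$ fields spans $TM$ on the (open) set where $R$ is pointwise independent of them; we shrink $U$ to that set. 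The last step of the structure is then automatic by Remark~\ref{rem:last}: $[R,\mathcal{D}_{2n-2}]\subset TM=\mathcal{D}_{2n-2}\oplus\langle R\rangle$.

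The core step is to show that for each $1\le j\le 2n-2$, $X_{f_j}$ is a \cinf-symmetry of $\mathcal{D}_{j-1}$. Independence gives $X_{f_j}\notin\mathcal{D}_{j-1}$. For brackets, we use that $f\mapsto X_f$ is a Lie algebra homomorphism. For $i<j$,
\[
[X_{f_j},X_{f_i}]=X_{\{f_j,f_i\}}=X_{F_{ji}(f_0,\dots,f_j)}=\sum_{k=0}^{j}\frac{\partial F_{ji}}{\partial x_k}(f_0,\dots,f_j)\,X_{f_k},
\]
by the chain rule $d(F\circ(f_0,\dots,f_j))=\sum_k (\partial_kF)\,df_k$ and linearity of $\flat^{-1}$. (If the paper's sign convention is $[X_f,X_g]=X_{\{g,f\}}$ or similar, only a sign changes.) The right-hand side lies in $\mathcal{D}_{j-1}\oplus\langle X_{f_j}\rangle$ with smooth coefficients. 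For a general element $Y=\sum_{i<j}g_iX_{f_i}$ of $\mathcal{D}_{j-1}$, the Leibniz rule gives $[X_{f_j},Y]=\sum_i X_{f_j}(g_i)X_{f_i}+\sum_i g_i[X_{f_j},X_{f_i}]$, which again lies in $\mathcal{D}_{j-1}\oplus\langle X_{f_j}\rangle$. We also record that each $\mathcal{D}_{j}$ is involutive: for $i<k\le j$, $\{f_k,f_i\}$ depends only on $f_0,\dots,f_k$, so $[X_{f_k},X_{f_i}]\in\mathcal{D}_k\subseteq\mathcal{D}_j$. This matches the alternative characterization in the remark following Definition~\ref{def1}.

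The triangular condition \eqref{condicion} is exactly what makes the bracket land only in the span of the first $j+1$ fields. So the only delicate point is bookkeeping: making sure that the dependence of $F_{ji}$ on $f_j$ itself is harmless, since it contributes a term proportional to $X_{f_j}$, which the \cinf-symmetry condition allows. The remaining care concerns restricting to the open set where all the fields, including $R$, are pointwise independent. Note also that the bracket condition is needed for $i=0$ as well, so that $[X_{f_j},X_H]$ is covered. Applying Theorem~\ref{TZ} then yields integration of $\mathcal{D}_H$ by $2n-1$ completely integrable Pfaffian equations.
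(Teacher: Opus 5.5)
Your proof is correct and follows the same route as the paper. Functional independence gives pointwise independence of $X_{f_0},\dots,X_{f_{2n-2}}$; the chain rule applied to $F_{ji}$ gives $[X_{f_j},X_{f_i}]\in\langle X_{f_0},\dots,X_{f_j}\rangle$; $R$ completes the structure via Remark~\ref{rem:last}; and Theorem~\ref{TZ} finishes. Your additional bookkeeping only makes explicit what the paper leaves implicit: you extend from generators to arbitrary elements of $\mathcal{D}_{j-1}$ by the Leibniz rule, you restrict to the open set where $R$ is pointwise independent, and you note that the sign convention is immaterial (a fair caution, since with $X_f\contract\omega=df$ and $\{f,g\}=\omega(X_f,X_g)$ one actually gets $[X_f,X_g]=X_{\{g,f\}}$).
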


\begin{proof}
\sloppy
Let $\mathcal{F}=(f_1,\dots,f_{2n-2})$ be a Poisson \cinf-structure, and set $f_0:=H$.
Since the functions $f_0,f_1,\dots,f_{2n-2}$ are functionally independent, the associated Hamiltonian
vector fields 
$$
X_{f_0}, X_{f_1},X_{f_2},\dots,X_{f_{2n-2}}
$$
are pointwise linearly independent.

We must show that the Lie brackets of these fields satisfy the condition required to be a \cinf-structure for $\mathcal{D}_H$. By the definition of a Poisson \cinf-structure, for any $j > i$, the Poisson bracket depends only on the preceding functions:
\[
\{f_j, f_i\} = F_{ji}(f_0, f_1, \dots, f_j)
\]
for some smooth function $F_{ji}$. Applying the chain rule, the differential of this bracket is
\[
d\{f_j, f_i\} = \sum_{k=0}^j \frac{\partial F_{ji}}{\partial f_k} df_k.
\]
Using the isomorphism between 1-forms and Hamiltonian vector fields, we obtain
\[
X_{\{f_j,f_i\}} = \sum_{k=0}^j \frac{\partial F_{ji}}{\partial f_k} X_{f_k}.
\]
This implies that
\[
[X_{f_j},X_{f_i}] \in \langle X_{f_0}, X_{f_1}, \dots, X_{f_j} \rangle.
\]
Thus, each $X_{f_j}$ is a \cinf-symmetry of the distribution generated by $X_H$ together with $(X_{f_1},\dots,X_{f_{j-1}})$. The ordered family $(X_{f_1},X_{f_2},\dots,X_{f_{2n-2}})$, completed with a pointwise linearly independent vector field $R$ (see Remark~\ref{rem:last}), therefore defines a \cinf-structure for $\mathcal{D}_H$ in the sense of Section~\ref{sec:preliminaries}.

The result now follows directly from Theorem~\ref{TZ}, which
guarantees that the integral manifolds of $\mathcal{D}_H$ can be obtained by
solving a sequence of completely integrable Pfaffian equations.
\end{proof}

\begin{remark}
The functions defining a Poisson \cinf-structure, while enabling integration, are not required to be
constants of motion. In particular, for $j\geq 1$, the functions $f_j$ may evolve
nontrivially along the Hamiltonian flow. This feature distinguishes the present
framework from Liouville--Arnold and Mishchenko--Fomenko integrability, which are
based exclusively on families of conserved quantities.
\end{remark}

As a consequence of Theorem~\ref{thm:main}, once a Poisson \cinf-structure is known for a Hamiltonian system, the integration of the equations of motion can be performed by means of a sequence of $2n-1$ completely integrable Pfaffian equations. The integration procedure, as described in Section \ref{prelCinf-structures}, requires the explicit construction of 1-forms given by equation \eqref{formscontract}. In the rest of this section, we will derive a more explicit expression for these 1-forms, directly in terms of the functions defining the Poisson \cinf-structure. 

For this purpose, we introduce an auxiliary and arbitrary smooth function, functionally independent of the family $(H, f_1, \dots, f_{2n-2})$, so its corresponding Hamiltonian field plays the role of $R$ in the proof of Theorem \ref{thm:main}. We will denote $f_0:=H$, and $f_{2n-1}$ the auxiliary function, for notational convenience.
Then, the procedure outlined in Section \ref{prelCinf-structures} involves the 1-forms:
\begin{equation}\label{originalforms}
    \omega_i = X_{f_{2n-1}} \contract X_{f_{2n-2}} \contract \cdots \, \widehat{X_{f_i}} \, \cdots \contract X_{f_{1}} \contract X_{f_0} \contract \Omega, \quad 1\leq i \leq 2n-1.
\end{equation}

Let $Z$ be an arbitrary vector field on $M$. For $1\leq i \leq 2n-1$, we have
$$
\omega_i(Z) = \Omega(X_{f_0}, X_{f_1}, \ldots,\widehat{X_{f_i}},\ldots, X_{f_{2n-1}}, Z),
$$
where we consider $\Omega$ the Liouville volume form defined in \eqref{liouvilleform}. By using equation \eqref{eq:pfaffianvol}, we have
$$
\omega_i(Z)=\operatorname{Pf}(\mathbf{M}_i),
$$
where $\mathbf{M}_i$ is the $2n \times 2n$ skew-symmetric matrix given by the symplectic products of the vector fields 
$$
X_{f_0}, X_{f_1}, \ldots, \widehat{X_{f_i}}, \ldots, X_{f_{2n-1}}, Z,
$$
arranged in that order. To analyze the structure of $\mathbf{M}_i$ in detail, we consider the $2n\times 2n$ matrix $\mathbf{F} = (w_{ab})$ with entries defined by the Poisson brackets:
$$
w_{ab} = \{f_{a-1}, f_{b-1}\}=F_{a-1,b-1},
$$
for certain smooth functions $F_{a-1,b-1}$.

For the arbitrary vector field $Z$, we have $\omega(X_{f_k}, Z) = df_k(Z)$. Therefore, the matrix $\mathbf{M}_i$ has the block structure
$$
\mathbf{M}_i = 
\left(
\begin{array}{c|c}
\mathbf{F}_{\widehat{i}} & d\mathbf{f}(Z)_{\widehat{i}} \\
\hline
-d\mathbf{f}(Z)_{\widehat{i}}^T & 0
\end{array}
\right),
$$
where $\mathbf{F}_{\widehat{i}}$ denotes the matrix $\mathbf{F}$ with the $(i+1)$-th row and column removed, and $d\mathbf{f}(Z)_{\widehat{i}}$ denotes the column vector of values $df_k(Z)$ for $k \in \{0,\dots,2n-1\}\setminus\{i\}$.

Now, we use equation \eqref{pfaffianexpansion} to expand $\operatorname{Pf}(\mathbf{M}_i)$ along its last column. The expansion yields
\begin{equation}
    \omega_i(Z) = \sum_{\substack{k=0 \\ k \neq i}}^{2n-1} (-1)^{k} \operatorname{sgn}(i-k) \operatorname{Pf}(\mathbf{F}_{\widehat{i,k}}) \, df_k(Z),
\end{equation}
where $\mathbf{F}_{\widehat{i,k}}$ denotes the matrix $\mathbf{F}$ with rows and columns $i+1$ and $k+1$ removed.

We can summarize the findings, for further reference, in the following theorem:
\begin{theorem}\label{thm:pfaffinaformula}
Let $(M,H)$ be a Hamiltonian system admitting a Poisson \cinf-structure $\mathcal{F} = (f_1, \ldots, f_{2n-2})$, and set $f_0:=H$. Let $f_{2n-1}\in \mathcal{C}^\infty(M)$ be functionally independent of $f_0,f_1,\dots,f_{2n-2}$. Then the system can be locally integrated by solving the Pfaffian equations generated by the 1-forms:
\begin{equation}\label{eq:omega_i_unified}
    \omega_i = \sum_{\substack{k=0 \\ k \neq i}}^{2n-1} (-1)^{k} \operatorname{sgn}(i-k) \operatorname{Pf}(\mathbf{F}_{\widehat{i,k}}) \, df_k, \quad 1 \leq i \leq 2n-1.
\end{equation}
\end{theorem}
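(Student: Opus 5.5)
The plan is to combine Theorem~\ref{thm:main} with the constructive procedure following Theorem~\ref{TZ}, and then identify the contraction forms \eqref{originalforms} with the Pfaffian expressions \eqref{eq:omega_i_unified} by evaluating both on an arbitrary vector field $Z$.

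First, since $f_{2n-1}$ is functionally independent of $f_0,\dots,f_{2n-2}$, the field $X_{f_{2n-1}}$ is pointwise independent of $X_{f_0},\dots,X_{f_{2n-2}}$ on the relevant open set. We may therefore take $R=X_{f_{2n-1}}$ in Theorem~\ref{thm:main} (see Remark~\ref{rem:last}). Then $(X_{f_1},\dots,X_{f_{2n-1}})$ is a \cinf-structure for $\mathcal{D}_H$. Theorem~\ref{TZ}, applied with the Liouville form $\Omega$ and $Y_1=X_{f_0}$, says that the integral curves are obtained by successively solving the Pfaffian equations of the forms \eqref{originalforms}. These forms satisfy \eqref{cinfcondforms}.

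It remains to show that \eqref{originalforms} equals \eqref{eq:omega_i_unified}. Fix $i$ and an arbitrary vector field $Z$.

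\begin{enumerate}
\item Reorder the contractions to write $\omega_i(Z)=\Omega(X_{f_0},\dots,\widehat{X_{f_i}},\dots,X_{f_{2n-1}},Z)$, taking care of any sign from the order of interior products.
\item By \eqref{eq:pfaffianvol}, this equals $\operatorname{Pf}(\mathbf{M}_i)$.
\item The matrix $\mathbf{M}_i$ has the block form displayed above. Its upper-left block is $\mathbf{F}_{\widehat i}$, using $\omega(X_{f_a},X_{f_b})=\{f_a,f_b\}$. Its last column is $df_k(Z)$, using $\omega(X_{f_k},Z)=df_k(Z)$.
\item Expand $\operatorname{Pf}(\mathbf{M}_i)$ along the last column using \eqref{pfaffianexpansion}, with $j=2n$. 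The diagonal term vanishes, and deleting the last row and column together with the row and column of $f_k$ leaves exactly $\mathbf{F}_{\widehat{i,k}}$.
\end{enumerate}

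Since $Z$ is arbitrary and the expansion is $C^\infty(M)$-linear in $df_k(Z)$, this gives the stated 1-forms.

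The main obstacle is the sign bookkeeping in the last step. The position of $X_{f_k}$ among the rows of $\mathbf{M}_i$ is $k+1$ if $k<i$ and $k$ if $k>i$, because the row of $f_i$ has been removed. So the factor $(-1)^{p+2n+1}$ becomes $(-1)^{k}$ or $(-1)^{k+1}$ according as $k<i$ or $k>i$. This is precisely $(-1)^k\operatorname{sgn}(i-k)$. I would check this case split carefully, together with the sign convention for nested contractions versus argument order, and confirm it on the case $n=1$ as a sanity test. An overall sign common to all $\omega_i$ is harmless anyway, since it does not affect the Pfaffian equations $\omega_i=0$ or condition \eqref{cinfcondforms}.
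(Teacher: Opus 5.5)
Your proposal is correct and follows the paper's own derivation: take $R=X_{f_{2n-1}}$, write $\omega_i(Z)=\operatorname{Pf}(\mathbf{M}_i)$ via \eqref{eq:pfaffianvol}, and expand along the last column with \eqref{pfaffianexpansion}. Your sign bookkeeping is right. The row position of $X_{f_k}$ is $k+1$ or $k$ according as $k<i$ or $k>i$, which gives $(-1)^k\operatorname{sgn}(i-k)$, and the nested contractions in \eqref{originalforms} already yield the argument order $(X_{f_0},\dots,\widehat{X_{f_i}},\dots,X_{f_{2n-1}},Z)$, so no extra sign arises.
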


\begin{example}[Case $n=2$]\label{casen2}
    Consider a Hamiltonian system of dimension $2n=4$. In this case, a Poisson \cinf-structure consists of two functions $\mathcal{F}=(f_1,f_2)$. Setting $f_0:=H$ and introducing an auxiliary function $f_3$, the general formula \eqref{eq:omega_i_unified} explicitly yields:
\begin{align*}
\omega_1 &= F_{23} \, df_0 - F_{03} \, df_2 + F_{02} \, df_3, \\
\omega_2 &= F_{13} \, df_0 - F_{03} \, df_1 + F_{01} \, df_3, \\
\omega_3 &= F_{12} \, df_0 - F_{02} \, df_1 + F_{01} \, df_2.
\end{align*}
\end{example}


\subsection{Integration algorithm}\label{sec:integration-algorithm}
According to the results of the previous section, we propose the following constructive algorithm for the explicit integration of a Hamiltonian system $(M, H)$ of dimension $2n$.
\begin{enumerate}[label=\textbf{Step \arabic*.}, leftmargin=*]
        
    \item Identify a Poisson $\mathcal{C}^{\infty}$-structure $\mathcal{F}=(f_{1}, \dots, f_{2n-2})$ for $(M,H)$ and set $f_0:=H$.
        
    \item Choose an arbitrary smooth function $f_{2n-1} \in \mathcal{C}^{\infty}(M)$ functionally independent of the set $\{f_0,f_1,\dots,f_{2n-2}\}$.
        
    \item Define the $2n \times 2n$ skew-symmetric matrix $\mathbf{F}$ with entries defined by the Poisson brackets:
        \begin{equation}
             w_{k+1,j+1}= \{f_k, f_j\}=F_{kj}, \quad 0 \le k, j \le 2n-1.
        \end{equation}
        
    \item Construct the sequence of $2n-1$ Pfaffian 1-forms $(\omega_1, \dots, \omega_{2n-1})$ using the matrix $\mathbf{F}$ and the formula:
        \begin{equation}\label{eq:omega_i_unified_repeated}
            \omega_i = \sum_{\substack{k=0 \\ k \neq i}}^{2n-1} (-1)^{k} \operatorname{sgn}(i-k) \operatorname{Pf}(\mathbf{F}_{\widehat{i,k}}) \, df_k, 
        \end{equation}
        for $1 \leq i \leq 2n-1$.

    \item Solve the Pfaffian equation $\omega_{2n-1} \equiv 0$, i.e., find a particular solution to the linear homogeneous first-order partial differential equations system in the unknown $I_{2n-1}$ given by:
        $$
        \omega_{2n-1} \wedge dI_{2n-1} = 0.
        $$

    \item Iterate: for $k = 2n-2, \dots, 1$, restrict the remaining 1-forms to any regular level set of the previously found integrals $\{I_{k+1}, \dots, I_{2n-1}\}$ and solve $\omega_{k} \equiv 0$ to find the next function $I_{k}$.
    
    \item The sequence of functions $I_1, \dots, I_{2n-1}$ provides the implicit description of the distribution $\mathcal{D}_H$.
\end{enumerate}

\begin{remark}
The algorithm does not require the explicit computation of action--angle
variables, invariant tori, or additional conserved quantities beyond the
Hamiltonian itself. Its effectiveness relies on the finding of a
Poisson \cinf-structure and the subsequent solution of the sequence of Pfaffian equations.
\end{remark}

\begin{example}
\sloppy
We illustrate the integration algorithm with the two-particle non-periodic Toda lattice. Consider the phase space $M = T^*\mathbb{R}^2 \cong \mathbb{R}^4$ with canonical coordinates 
$(q_1,q_2,p_1,p_2)$ and Hamiltonian
\[
H=\frac12 p_1^2+\frac12 p_2^2+e^{q_1-q_2}.
\]
Introducing center-of-mass and relative coordinates
\[
Q=\frac{q_1+q_2}{2}, \qquad q=q_1-q_2, \qquad P=p_1+p_2, \qquad p=\frac{p_1-p_2}{2},
\]
the transformation is symplectic and the Hamiltonian takes the form
\[
H=p^2+\frac{P^2}{4}+e^{q}.
\]
We define the ordered family
\[
\mathcal{F}=(f_1,f_2) \qquad\text{by}\qquad f_1=P,\quad f_2=Q,
\]
and set $f_0:=H$. These functions are functionally independent. Their Poisson brackets satisfy
\begin{align*}
\{f_1,f_0\} &= \{P, H\} = 0, \\
\{f_2,f_0\} &= \{Q, H\} = \frac{\partial H}{\partial P} = \frac{P}{2} = \frac{1}{2}f_1, \\
\{f_2,f_1\} &= \{Q, P\} = 1.
\end{align*}
Observe that even if $f_1$ is a conserved quantity, $f_2$ is not. But condition \eqref{condicion} is satisfied, so $\mathcal{F}$ constitutes a Poisson \cinf-structure.

To apply the integration algorithm, we choose the auxiliary function $f_3=p$, which is functionally independent of $\{f_0,f_1,f_2\}$. The skew-symmetric matrix $\mathbf{F}=(\{f_i,f_j\})$ is given by
\[
\mathbf{F} = 
\begin{pmatrix}
0 & 0 & -P/2 & e^q \\
0 & 0 & -1 & 0 \\
P/2 & 1 & 0 & 0 \\
-e^q & 0 & 0 & 0
\end{pmatrix}.
\]

\sloppy
Following the algorithm, we now compute explicitly the three Pfaffian 1-forms $\omega_1, \omega_2, \omega_3$, using formula \eqref{eq:omega_i_unified_repeated}:
\begin{align*}
\omega_1 &= -\frac{P}{2} dp - e^q dQ, \\
\omega_2 &= -e^q dP, \\
\omega_3 &= -dH +\frac{P}{2} dP.
\end{align*}

We now proceed to solve the associated Pfaffian equations sequentially. The equation $\omega_3 = 0$ yields the first integral:
\[
I_3 = H - \frac{1}{4}P^2 = p^2 + e^q,
\]
which represents the energy of the relative motion. We restrict the system to the level set $\Sigma_{c_3} = \{x \in M \mid I_3(x) = c_3\}$. On this level set, we solve for $q$ (specifically $e^q$) in terms of $p$, which implies the substitution:
\begin{equation}\label{relation}
    e^q = c_3 - p^2.
\end{equation}
Restricting $\omega_2$ to this level set by substituting $e^q$, we obtain:
\[
\omega_2|_{\Sigma_{c_3}} = -(c_3 - p^2) dP.
\]
In any open neighborhood where $c_3 - p^2 \neq 0$, the equation $\omega_2|_{\Sigma_{c_3}} \equiv 0$ implies $dP \equiv 0$. This yields the second integral $I_2 = P$. We further restrict the 1-form $\omega_1$ to the submanifold $\Sigma_{c_3, c_2} = \Sigma_{c_3} \cap \{P = c_2\}$ by substituting $P=c_2$ and $e^q = c_3 - p^2$:
\[
\omega_1|_{\Sigma_{c_3, c_2}} = -\frac{c_2}{2} dp - (c_3 - p^2) dQ.
\]
Setting this form to zero yields a separable differential equation:
\[
-\frac{c_2}{2} dp = (c_3 - p^2) dQ.
\]

Integrating this equation yields the solution for $Q$, and consequently the integral curves of the Hamiltonian distribution:
\begin{align*}
q(p)&= \ln (c_3 - p^2) \\
Q(p) &= -\frac{c_2}{2\sqrt{c_3}} \operatorname{arctanh}\left(\frac{p}{\sqrt{c_3}}\right) + c_1\\
P(p) &= c_2 \\
\end{align*}

Now, we can recover the time dependency by substituting the derived relation \eqref{relation} into the equation of motion for $p$:
\begin{equation}
    \dot{p} = - \frac{\partial H}{\partial q} = -e^q = -(c_3 - p^2).
\end{equation}

The full solution set $(q(t), Q(t), P(t))$ follows immediately by substituting $p(t)$ back into the geometric relations derived previously.
   
\end{example}

\begin{remark}
The two–particle non–periodic Toda lattice is well known to be Liouville integrable and admits explicit solutions by classical methods, including separation of variables and action--angle coordinates. The purpose of the present example is not to rederive these results, but to show that the dynamics can be integrated \emph{directly} from a Poisson \cinf-structure, without appealing to invariant tori, commuting families of first integrals, or canonical transformations to normal forms. In this sense, the example serves as a benchmark illustrating how the proposed framework recovers exact solvability by a fundamentally different mechanism.
\end{remark}

\subsection{Time-dependent Hamiltonian systems}\label{sec:time-dependent}

A time-dependent Hamiltonian system with $n$ degrees of freedom is described by coordinates $(q_i,p_i)$, which we denote collectively by $(q,p)$. The dynamics is defined by a Hamiltonian function $H(q,p,t)$ that explicitly depends on time $t$. Such systems can be formulated as an autonomous Hamiltonian system on the extended phase space
with coordinates $(t,q,E,p)$, where $E$ plays the role of the variable conjugate to
time $t$. The extended phase space has dimension $2(n+1)$ and is equipped with the canonical symplectic form
\[
\omega_{\mathrm{ext}} = \omega+dt \wedge dE,
\]
with a Poisson bracket defined by
\[
\{t,E\}=1,\qquad \{q_i,p_j\}=\delta_{ij},
\]
all others vanishing. 

The extended Hamiltonian is given by
\[
H_{\mathrm{ext}}(q,p,t,E)=H(q,p,t)+E,
\]
and the Hamiltonian vector field of $H_{\mathrm{ext}}$
generates both the original dynamics and the trivial evolution
$\dot t=1$.

Since the extended phase space has dimension $2(n+1)$, a Poisson
\cinf-structure would, in
principle, require an ordered family of $2(n+1)-2=2n$ functionally
independent functions (in addition to the Hamiltonian $H_{\mathrm{ext}}$).

A key observation is that the time variable $t$ always satisfies the Poisson relations
\[
\{t,H_{\mathrm{ext}}\}=\{t,E\}=1,
\]
\[
\{g,t\}=0,
\]
for any function $g=g(q,p,t)$ independent of $E$.
As a consequence, $t$ can always be chosen as the first nontrivial
element of a Poisson \cinf-structure for the extended system, provided that we restrict to functions independent of $E$. This
reduces the construction problem to the search for $2n-1$
additional functions.

Thus, the transition to time-dependent Hamiltonians requires no modification to the underlying algebraic framework. Moreover, the increase in the phase space dimension by two does not correspond to the search for two additional functions in the Poisson \cinf-structure, since one of them is always provided by the time variable itself.

\begin{example}
We now present an explicit example of a time-dependent Hamiltonian
system that is exactly solvable via a Poisson \cinf-structure. Consider a one degree of freedom time-dependent Hamiltonian system with Hamiltonian
\[
H(q,p,t)=\frac{1}{2}p^2+q\,t.
\]
Passing to the extended phase space with coordinates $(t,q,E,p)$, the
associated extended Hamiltonian is
\[
H_{\mathrm{ext}}=\frac{1}{2}p^2+q\,t+E,
\]
with canonical Poisson brackets
\[
\{q,p\}=1,\qquad \{t,E\}=1,
\]
all others vanishing. The Hamiltonian vector field of $H_{\mathrm{ext}}$ is
\[
X_{H_{\mathrm{ext}}}
=
p\,\frac{\partial}{\partial q}
-
t\,\frac{\partial}{\partial p}
+
\frac{\partial}{\partial t}
-
q\,\frac{\partial}{\partial E},
\]
and the equations of motion are
\[
\dot q=p,\qquad
\dot p=-t,\qquad
\dot t=1,\qquad
\dot E=-q.
\]

Since the extended phase space has dimension $4$, a Poisson
\cinf-structure consists of two functionally independent functions.
We define the ordered family
\[
\mathcal F=(f_1,f_2)
\qquad\text{with}\qquad
f_1=t,\quad
f_2=p,
\]
and set $f_0:=H_{\mathrm{ext}}$. These functions are functionally independent and satisfy
\[
\{f_1,f_0\}=1,\qquad
\{f_2,f_0\}=-t=-f_1,\qquad
\{f_2,f_1\}=0.
\]
Hence, the triangular closure condition of
Definition~\ref{def:poisson-cinf} is satisfied and $\mathcal F$ defines a Poisson \cinf-structure for the extended Hamiltonian system. Consider now the auxiliary function $f_3=q$, which is functionally independent of $\{f_0,f_1,f_2\}$. 

Following the algorithm, we compute the Pfaffian 1-forms using formula \eqref{eq:omega_i_unified_repeated}. The explicit calculation yields:
\begin{align*}
    \omega_1 &= -q \, dt - dE \\
    \omega_2 &= p \, dt - dq \\
    \omega_3 &= -t \, dt - dp
\end{align*}

We now proceed to solve the associated Pfaffian equations sequentially, starting from $\omega_3 \equiv 0$. We have $dp = -t \, dt$, whose integration yields the first integral
\begin{equation}
    I_3 = p + \frac{1}{2}t^2 = c_3,
\end{equation}
so that the momentum evolves as $p(t) = c_3 - \frac{1}{2}t^2$. 

Next, we substitute this expression for $p(t)$ into the equation $\omega_2 \equiv 0$, obtaining $dq = (c_3 - \frac{1}{2}t^2) \, dt$. Integration gives the restriction
\begin{equation}
    I_2 = q - c_3 t + \frac{1}{6}t^3 = c_2,
\end{equation}
which implies $q(t) = c_2 + c_3 t - \frac{1}{6}t^3$. 

Finally, we substitute $q(t)$ into $\omega_1 \equiv 0$, which becomes $dE = -(c_2 + c_3 t - \frac{1}{6}t^3) \, dt$. Integrating for the energy variable $E$ yields
\begin{equation}
    I_1 = E + c_2 t + \frac{1}{2}c_3 t^2 - \frac{1}{24}t^4 = c_1.
\end{equation}

Thus, the Poisson $C^\infty$-structure allows us to recover the exact solution of the time-dependent system. The integral curves of $X_{H_{\mathrm{ext}}}$ are explicitly given by:
\begin{equation}
\begin{aligned}
    p(t) &= -\frac{1}{2}t^2 + c_3, \\
    q(t) &= -\frac{1}{6}t^3 + c_3 t + c_2, \\
    E(t) &= \frac{1}{24}t^4 - \frac{1}{2}c_3 t^2 - c_2 t + c_1,
\end{aligned}
\end{equation}
where $c_1, c_2, c_3 \in \mathbb{R}$ are the constants of integration determined by the initial conditions. This confirms that the method correctly integrates the system without requiring the transformation to a time-independent autonomous system or the search for a second commuting first integral in the extended phase space.
\end{example}

In this sense, time-dependent Hamiltonian systems provide a natural
source of exactly solvable dynamics that can be integrated without
appealing to Liouville--Arnold or Mishchenko--Fomenko integrability,\linebreak[3] and
are naturally captured by Poisson \cinf-structures.

\subsection{Applications to Vlasov Plasma}

The Vlasov equation provides a fundamental kinetic description of collisionless
plasmas and other many-particle systems. We consider the one-dimensional Vlasov
equation for a distribution function $f(x,v,t)$,
\begin{equation}\label{eq:vlasov}
\partial_t f + v\,\partial_x f + F[f]\,\partial_v f = 0,
\end{equation}
where $x,v\in\mathbb R$ and $t\ge0$. Equation \eqref{eq:vlasov} expresses the
conservation of $f$ along the phase-space characteristics
\[
\dot x = v,
\qquad
\dot v = F[f].
\]
The force term $F[f]$ depends functionally on the distribution function $f$,
rendering \eqref{eq:vlasov} a nonlinear integro-differential equation. Important
examples include the Vlasov--Poisson and Vlasov--Maxwell systems.

The Vlasov equation admits a noncanonical Hamiltonian formulation on the space
of distribution functions. The Hamiltonian functional is
\begin{equation}\label{hamiltonian}
\mathcal{H}[f]
=
\int \tfrac12 v^2 f(x,v)\,dx\,dv + \mathcal{V}[f],
\end{equation}
where $\mathcal V[f]$ denotes interaction or field energy. The evolution can
be written as
\[
\partial_t f = \{f,\mathcal{H}\},
\]
with respect to the Vlasov Poisson bracket \eqref{eq:vlasov-bracket}. Velocity moments
\begin{equation}\label{momentsintegral}
m_n(x,t)=\int v^n f(x,v,t)\,dv,\qquad n\ge0,
\end{equation}
define a mapping
\[
\mathcal M:\mathfrak g^* \to \mathbb R^{N+1},\qquad
f\mapsto(m_0,\dots,m_N).
\]
This mapping does not define an invariant submanifold of the Vlasov dynamics,
since the evolution of $m_n$ depends on higher-order moments. Moreover, the
Vlasov Lie--Poisson bracket does not generally close on functions of the
retained moments. Burby’s approach \cite{Burby2023VariableMoment} addresses this algebraically by postulating closure relations
\[
m_{N+k}=\Phi_{N+k}(m_0,\dots,m_N),\qquad k\ge1,
\]
and requiring that the induced bracket satisfy the Jacobi identity. When this
condition holds, the reduced system defines a finite-dimensional
Lie--Poisson structure \cite{ScovelWeinstein1994}, although the reduction is not dynamically invariant.
Hamiltonian consistency is preserved, but exact correspondence with Vlasov
solutions is not guaranteed.

This motivates the search for additional geometric constraints beyond the
Jacobi identity. Rather than focusing on commuting integrals, we study the
internal organization of the Poisson bracket itself. If the bracket closes in
a triangular manner on an ordered family of functions, the Hamiltonian vector
field generates an integrable one-dimensional distribution.

Poisson \cinf-structures therefore act as a selection principle within the
space of Hamiltonian reductions, isolating special subclasses for which the
Poisson algebra is sufficiently rigid to allow explicit integration. 
We therefore restrict attention to \emph{waterbag distributions}.

\subsubsection{Waterbags}

We consider a purely kinetic Vlasov model and therefore set the interaction
potential to zero. We restrict attention to spatially homogeneous distribution
functions $f=f(v)$. Under this assumption, the Hamiltonian functional reduces to
\begin{equation}\label{hamwaterbag1}
\mathcal{H}[f]
=
\int \frac12 v^2 f(v)\,dv.
\end{equation}

We consider waterbag distribution functions of the form
\begin{equation}\label{eq:waterbag}
f(v)
=
\sum_{k=1}^N
\chi_{[v_k^-,\,v_k^+]}(v),
\end{equation}
where $\chi_{[a,b]}$ denotes the characteristic function of the interval
$[a,b]$, equal to $1$ for $a\le v\le b$ and zero otherwise. The parameters
$v_k^\pm$ represent the lower and upper velocity boundaries of each bag.
The family \eqref{eq:waterbag} defines a finite-dimensional submanifold of the
space of distribution functions, parametrized by the $2N$ variables
$(v_1^-,v_1^+,\dots,v_N^-,v_N^+)$.

Substituting \eqref{eq:waterbag} into the Hamiltonian functional and integrating
over $v$ yields a finite-dimensional Hamiltonian function
\begin{equation}\label{Ham2}
H(v_1^\pm,\dots,v_N^\pm)
=
\sum_{k=1}^N
\frac16\bigl[(v_k^+)^3-(v_k^-)^3\bigr],
\end{equation}
which is an exact restriction of the Vlasov kinetic energy to the manifold of
waterbag distributions.

\medskip
\noindent
\textbf{Reduction of the Vlasov Lie--Poisson bracket.}
We now determine the Poisson structure induced on the waterbag parameters
$v_k^\pm$ by restriction of the Vlasov Lie--Poisson bracket. At the
infinite-dimensional level, the Vlasov equation admits the Lie--Poisson
formulation
\begin{equation}\label{eq:vlasov-bracket}
\{F,G\}(f)
=
\int_{\mathbb R^2}
f(x,v)\,
\Bigl\{
\frac{\delta F}{\delta f},
\frac{\delta G}{\delta f}
\Bigr\}_{\mathrm{can}}
\,dx\,dv,
\end{equation}
where the canonical Poisson bracket on phase space is $\{a,b\}_{\mathrm{can}}
=
\partial_x a\,\partial_v b - \partial_v a\,\partial_x b$. Since we work throughout with spatially homogeneous distributions $f=f(v)$,
the dependence on $x$ is trivial and $\partial_x=0$. Consequently, the Vlasov
bracket reduces to boundary contributions associated with the discontinuities
of $f$ in velocity space. Let
\[
F(f)=\tilde F(v_1^\pm,\dots,v_N^\pm),
\qquad
G(f)=\tilde G(v_1^\pm,\dots,v_N^\pm)
\]
be functionals depending on $f$ only through the waterbag boundary variables.

To compute variations of $f$ with respect to the parameters $v_k^\pm$, it is
convenient to write the characteristic function as $\chi_{[a,b]}(v)=H(v-a)-H(v-b),$
where $H$ denotes the Heaviside function. Differentiating with respect to the
parameters $a$ and $b$ yields

\[
\delta f(v)
=
\sum_{k=1}^N
\left[
\delta(v-v_k^+)\,\delta v_k^+
-
\delta(v-v_k^-)\,\delta v_k^-
\right].
\]
So that by definition of the functional derivative, $\delta F
=
\int \frac{\delta F}{\delta f}(v)\,\delta f(v)\,dv,$
\[
\delta F
=
\sum_{k=1}^N
\left(
\frac{\partial \tilde F}{\partial v_k^+}\,\delta v_k^+
+
\frac{\partial \tilde F}{\partial v_k^-}\,\delta v_k^-
\right).
\]
Identifying coefficients, we obtain
\begin{equation}\label{eq:functional-derivatives}
\frac{\delta F}{\delta f}(v_k^+)
=
\frac{\partial \tilde F}{\partial v_k^+},
\qquad
\frac{\delta F}{\delta f}(v_k^-)
=
-\frac{\partial \tilde F}{\partial v_k^-}.
\end{equation}

Since $f$ is piecewise constant in $v$, the only contributions to the bracket
\eqref{eq:vlasov-bracket} arise at the discontinuities of $f$. Using
\eqref{eq:functional-derivatives}, the reduced bracket on the waterbag
parameters is
\begin{equation}\label{eq:reduced-bracket}
\{F,G\}
=
\sum_{k=1}^N
\left(
\frac{\partial \tilde F}{\partial v_k^+}
\frac{\partial \tilde G}{\partial v_k^-}
-
\frac{\partial \tilde F}{\partial v_k^-}
\frac{\partial \tilde G}{\partial v_k^+}
\right).
\end{equation}

Applying \eqref{eq:reduced-bracket} to the coordinate functions
$v_i^\pm$, we obtain
\[
\{v_i^+,v_j^-\}=\delta_{ij},
\qquad
\{v_i^\pm,v_j^\pm\}=0,
\]
with all brackets between different waterbags vanishing. This defines a
finite-dimensional Poisson structure which is the exact restriction of the
Vlasov Lie--Poisson bracket to the manifold of waterbag distributions.
It is convenient to introduce the width and center variables
\[
w_k := v_k^+ - v_k^-,
\qquad
c_k := v_k^+ + v_k^-,
\]
so that $v_k^\pm=\tfrac12(c_k\pm w_k)$. In these variables the Poisson bracket
takes the canonical form
\[
\{w_k,c_\ell\}=2\,\delta_{k\ell},
\qquad
\{w_k,w_\ell\}=\{c_k,c_\ell\}=0,
\]
and the Hamiltonian \eqref{Ham2} becomes
\[
H
=
\sum_{k=1}^N
\left(
\frac14\,c_k^2 w_k
+
\frac{1}{12}\,w_k^3
\right),
\]
a cubic polynomial completely decoupled by bags. 

We will illustrate the Pfaffian integration procedure for the decoupled
waterbag dynamics by considering the simplest nontrivial case
$N=2$. The phase space has dimension $4$, with coordinates
$(w_1,c_1,w_2,c_2)$, and the Hamiltonian reads
\[
H = H_1 + H_2,
\qquad
H_k=\frac14\,c_k^2 w_k+\frac{1}{12}\,w_k^3.
\]

A direct calculation from the Hamiltonian for two waterbags yields:
\[
X_H
=
c_1 w_1\,\partial_{w_1}
-\frac12(c_1^2+w_1^2)\partial_{c_1}
+
c_2 w_2\,\partial_{w_2}
-\frac12(c_2^2+w_2^2)\partial_{c_2}.
\]
The Hamiltonian distribution to be integrated is therefore $\mathcal D_H=\langle X_H\rangle $,
which has rank $k=1$. The associated symplectic form is
\[
\omega=\frac12\bigl(dw_1\wedge dc_1+dw_2\wedge dc_2\bigr),
\]
and the Liouville volume form is
\[
\Omega=\frac14\,dw_1\wedge dc_1\wedge dw_2\wedge dc_2.
\]

Since the dimension is $2n=4$, a Poisson $\mathcal C^\infty$–structure
consists of $2n-2=2$ functions. We define the ordered family
\[
\mathcal F=(f_1,f_2):=(H_1,w_1),
\]
and set $f_0:=H$.
The functions $(f_0,f_1,f_2)$ are functionally independent on the open
set $c_1w_1\neq0$.

The Poisson brackets among the functions $(f_0,f_1,f_2)$ satisfy:
\begin{align*}
\{f_1,f_0\}&=F_{10}=\{H_1,H_1+H_2\}=0,\\
\{f_2,f_0\}&=F_{20}=\{w_1,H_1+H_2\}=\{w_1,H_1\}=c_1 w_1,\\
\{f_2,f_1\}&=F_{21}=\{w_1,H_1\}=c_1 w_1.
\end{align*}
Taking into account that $c_1^2=\frac{4f_1}{f_2}-\frac13 f_2^2$, the above brackets satisfy condition \eqref{condicion}, so that $\mathcal F$ defines a Poisson \cinf-structure for the system. 

The remaining relevant brackets are
\begin{align*}
\{f_3,f_0\}&=F_{30}=\{w_2,H_1+H_2\}=\{w_2,H_2\}=c_2 w_2,\\
\{f_3,f_1\}&=F_{31}=\{w_2,H_1\}=0,\\
\{f_3,f_2\}&=F_{32}=\{w_2,w_1\}=0.
\end{align*}

Following the algorithm in Section~\ref{sec:integration-algorithm}, we compute the three Pfaffian 1-forms using the explicit $n=2$ formulas in Example~\ref{casen2}:
\begin{align*}
\omega_1 &= F_{23}\,df_0 - F_{03}\,df_2 + F_{02}\,df_3
= (c_2w_2)\,dw_1 - (c_1w_1)\,dw_2,\\
\omega_2 &= F_{13}\,df_0 - F_{03}\,df_1 + F_{01}\,df_3
= (c_2w_2)\,dH_1,\\
\omega_3 &= F_{12}\,df_0 - F_{02}\,df_1 + F_{01}\,df_2
= -(c_1w_1)\,d(H-H_1)
=-(c_1w_1)\,dH_2.
\end{align*}

We now solve the Pfaffian equations sequentially. On any open neighborhood where $c_1w_1\neq 0$, the equation $\omega_3\equiv 0$ implies $dH_2\equiv 0$. Hence, we obtain the first integral (energy of the second subsystem)
\[
H_2=\frac14c_2^2w_2+\frac{1}{12}w_2^3.
\]
Restricting to the level set with $H_2$ constant, and solving for $c_2$, we obtain
\begin{equation}\label{rel_c2}
c_2 = \pm\sqrt{\frac{4H_2}{w_2} - \frac{w_2^2}{3}}.
\end{equation}

On this leaf, the equation $\omega_2\equiv 0$ reduces (for $c_2w_2\neq 0$) to $dH_1\equiv 0$, and we obtain
\[
H_1=\frac14c_1^2w_1+\frac{1}{12}w_1^3.
\]

Solving for $c_1$ on a level set with $H_1$ constant gives
\begin{equation}\label{rel_c1}
c_1 = \pm\sqrt{\frac{4H_1}{w_1} - \frac{w_1^2}{3}}.
\end{equation}

Finally, restricting $\omega_1$ to the level set above, and using
\[
c_iw_i=\pm\sqrt{4H_iw_i-\tfrac13 w_i^4}=\pm\frac{1}{\sqrt{3}}\sqrt{12H_iw_i-w_i^4},\qquad i=1,2,
\]
we obtain
\[
\frac{dw_1}{\sqrt{4H_1 w_1 - \tfrac13 w_1^4}} - \frac{dw_2}{\sqrt{4H_2 w_2 - \tfrac13 w_2^4}} = 0,
\]
where the overall sign is fixed locally by continuity along integral curves.
This implies that both terms equal a common differential $dt$, yielding the separated equations of motion
\begin{equation}\label{eq:w1motion}
\frac{dw_1}{\sqrt{4H_1 w_1 - \tfrac13 w_1^4}} = dt, \qquad
\frac{dw_2}{\sqrt{4H_2 w_2 - \tfrac13 w_2^4}} = dt.
\end{equation}

The explicit integration of \eqref{eq:w1motion} yields the solution in terms of the Weierstrass elliptic function $\wp(z;g_2,g_3)$. For the $i$-th waterbag ($i=1,2$), the width evolves as
\begin{equation}\label{eq:weierstrass_sol}
    w_i(t) = \frac{H_i}{\wp(t + \tau_i; 0, H_i^2/3)},
\end{equation}
where $\tau_i$ is an integration constant determined by the initial conditions. In this equianharmonic case,
\[
g_2 = 0, \qquad g_3 = \frac{H_i^2}{3},
\]
so the period lattice is triangular (hexagonal) and $w_i(t)$ is periodic, oscillating between $w=0$ and the maximal width $w_{\max}=\sqrt[3]{12H_i}$.

Finally, the center variable is recovered as
\[
c_i(t)=\pm\sqrt{\frac{4H_i}{w_i(t)}-\frac{w_i(t)^2}{3}},
\]
where the sign is fixed by the initial condition (and hence locally by continuity along the integral curve).

\section{Generalization of Poisson \texorpdfstring{$C^\infty$}{Cinfty} Structures to Jacobi Manifolds: Jacobi \texorpdfstring{$C^\infty$}{Cinfty} Structures}\label{sec:jacobicinf}
In this section, we generalize the previously investigated Poisson $C^\infty$-structures to Jacobi manifolds, resulting in Jacobi $C^\infty$ structures. The Jacobi manifolds involved specifically include Poisson, symplectic, locally conformally symplectic, and contact manifolds as subclasses of Jacobi manifolds, and we illustrate each with examples \cite{sardon1,sardon2,sardon3}. The framework relies on the existence of a Lie bracket on functions and a mapping $f \mapsto X_f$ of Hamiltonian vector fields satisfying
\[
[X_f,X_g] = X_{\{f,g\}}.
\]
This property links algebraic relations among functions to closure properties of Hamiltonian vector fields, independent of the Leibniz rule or non-degeneracy.

\begin{definition}\label{def:jac}
A \emph{Jacobi manifold} is a triple $(M,\Lambda,E)$, where $M$ is a smooth
manifold, $\Lambda$ is a bivector field and $E$ is a vector field on $M$,
satisfying
\[
[\Lambda,E]_{SN}=0,
\qquad
[\Lambda,\Lambda]_{SN}=2E\wedge\Lambda,
\]
where $[\cdot,\cdot]_{SN}$ denotes the Schouten--Nijenhuis bracket. The associated
Jacobi bracket on $C^\infty(M)$ is defined by
\[
\{f,g\}=\Lambda(df,dg)+fE(g)-gE(f).
\]
Given a function $f\in C^\infty(M)$, the corresponding Hamiltonian vector field is
\begin{align}\label{Ha} X_f=\Lambda^\sharp(df)+fE. 
\end{align}
\end{definition}
Classical geometric structures appearing in Hamiltonian dynamics arise as
particular realizations of Jacobi manifolds. Their mutual relations are summarized in the following paragraph for Jacobi substructures.
\begin{center}
\begin{tikzpicture}[
  level distance=1.2cm,
  sibling distance=4cm,
  every node/.style={align=center, font=\small}
]
\node {Jacobi geometry\\ $(M,\Lambda,E)$}
  child { node {Poisson geometry\\ $(M,\Lambda)$}
    child { node {Symplectic geometry\\ $(M,\omega)$} }
  }
  child { node {LCS geometry\\ $(M,\Omega,\theta)$} }
  child { node {Contact geometry\\ $(M,\eta)$} };
\end{tikzpicture}
\end{center}

\paragraph{Realizations of Jacobi manifolds.}
Let $(M,\Lambda,E)$ be a Jacobi manifold. The main geometric structures appearing
in Hamiltonian dynamics are distinguished by the properties of the vector field
$E$ and by dimensional considerations.

If $E=0$, the Jacobi identities reduce to $[\Lambda,\Lambda]=0$, and the Jacobi
bracket becomes a Poisson bracket. In this case, $(M,\Lambda)$ is a Poisson
manifold, and if $\Lambda$ is non-degenerate, it corresponds to a symplectic
manifold $(M,\omega)$.

If $E\neq 0$ and $\dim M=2n$, Jacobi manifolds give rise to locally conformally
symplectic structures. In this case, the bivector $\Lambda$ is induced by a
non-degenerate two-form $\Omega$ satisfying $d\Omega=\theta\wedge\Omega$, where $\theta$ is a closed 1-form and the
vector field $E$ coincides with the Lee vector field $Z_\theta$.

If $\dim M=2n+1$, Jacobi manifolds correspond to contact manifolds. The vector
field $E$ is (up to sign) the Reeb vector field associated with the contact
one-form $\eta$, and the Jacobi bracket coincides with the contact bracket.

\begin{table}[ht]
\centering
\footnotesize
\begin{tabular}{l l l l}
\hline
Structure & Geometric data & Jacobi vector $E$ & Dimension \\
\hline \\[0.1ex]
Jacobi
& $(M,\Lambda,E)$
& arbitrary
& any \\[0.6ex]

Poisson
& $(M,\Lambda)$
& $E=0$
& any \\[0.6ex]

Symplectic
& $(M,\omega)$
& $E=0$
& $2n$ \\[0.6ex]

LCS
& $(M,\Omega,\theta)$
& $E=Z_\theta=\Omega^\sharp(\theta)$
& $2n$ \\[0.6ex]

Contact
& $(M,\eta)$
& $E=-\mathcal{R}$
& $2n+1$ \\[0.6ex]
\\[0.1ex] \hline
\end{tabular}
\caption{{\small Jacobi geometry and its main particular cases.}}
\label{Jacobi-summary}
\end{table}

\paragraph{Hamiltonian dynamics on Jacobi realizations.}
Let $f\in C^\infty(M)$ be a Hamiltonian function. In a Jacobi manifold $(M, \Lambda, E)$, as given in equation \eqref{Ha}, the associated Hamiltonian vector field is defined as
\[
X_f = \Lambda^\sharp(df) + fE .
\]
This expression specializes in each of the above cases.

On a Poisson manifold, where $E=0$, one recovers the usual Hamiltonian vector
field
\[
X_f=\Lambda^\sharp(df).
\]

On a locally conformally symplectic manifold $(M,\Omega,\theta)$, the Hamiltonian
vector field is given by
\[
X_f\contract\Omega=d_\theta f,
\qquad
X_f=\Omega^\sharp(df)+fZ_\theta,
\]
where $d_\theta f=df-f\theta$ and  $Z_\theta$ is the vector field satisfying $Z_\theta\contract\Omega=-\theta.$

On a contact manifold $(M,\eta)$ with Reeb vector field $\mathcal{R}$, the Hamiltonian vector field $X_f$ associated to a smooth function $f$ is defined by the equations
\[
X_f\contract\eta = f, \qquad X_f\contract d\eta = df - \mathcal{R}(f)\eta,
\]
or equivalently,
\[
X_f = \Lambda^{\sharp}(df) - f\mathcal{R},
\]
where $\Lambda^\sharp: T^*M \to TM$ is the isomorphism induced by the restriction of $d\eta$ to the contact distribution $\ker\eta$. 

In all cases, the correspondence $f\mapsto X_f$ satisfies
\[
[X_f,X_g]=X_{\{f,g\}},
\]
which is the fundamental property exploited in the remainder of this work.

\subsection{Jacobi Hamiltonian systems}

Reconsider a $m$-dimensional Jacobi manifold $(M,\Lambda,E)$ as in Definition \ref{def:jac}.
From the previous section, we can state that unlike symplectic manifolds, a Jacobi manifold need not be
even-dimensional, and the Hamiltonian flow associated with a function
$H\in C^\infty(M)$ does not necessarily preserve $H$ itself. This phenomenon
occurs, for instance, in contact and locally conformally symplectic
manifolds.

Moreover, if the Jacobi vector field $E$ does not vanish, the Hamiltonian
vector field associated with the constant function $1$ satisfies
\[
X_1 = E \neq 0.
\]
This behavior has no analogue in symplectic or Poisson geometry, where
constant functions always generate the zero Hamiltonian vector field.
This motivates the introduction of Jacobi $C^\infty$-structures, which is more general than Poisson $C^\infty$-structure studied previously.

But first, we introduce the following result:

\begin{proposition}\label{JacC}
Let $(M,\Lambda,E)$ be a Jacobi manifold and let $f_0,\dots,f_j\in C^\infty(M)$. For any smooth function $F\in C^\infty(\mathbb{R}^{j+1})$ one has
\[
X_{F(f_0,\dots,f_j)}
=\sum_{k=0}^j \frac{\partial F}{\partial u_k}(f_0,\dots,f_j)\,X_{f_k}
+\Bigl(F(f_0,\dots,f_j)-\sum_{k=0}^j f_k\,\frac{\partial F}{\partial u_k}(f_0,\dots,f_j)\Bigr)E.
\]
In particular, if $\{f_j,f_i\}=F_{ji}(f_0,\dots,f_j)$ for some $F_{ji}$, then
\[
[X_{f_j},X_{f_i}]=X_{\{f_j,f_i\}}\in \langle X_{f_0},\dots,X_{f_j},E\rangle.
\]
\end{proposition}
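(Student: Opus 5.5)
The plan is to compute $X_{F(f_0,\dots,f_j)}$ directly from the defining formula \eqref{Ha}, $X_g=\Lambda^\sharp(dg)+gE$. Write $g:=F(f_0,\dots,f_j)$ and $\partial_k F:=\frac{\partial F}{\partial u_k}(f_0,\dots,f_j)$.

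\textbf{Step 1: first identity.} By the chain rule, $dg=\sum_{k=0}^j \partial_kF\,df_k$. Since $\Lambda^\sharp$ is $C^\infty(M)$-linear on 1-forms,
\[
\Lambda^\sharp(dg)=\sum_{k=0}^j \partial_kF\,\Lambda^\sharp(df_k).
\]
Formula \eqref{Ha} applied to each $f_k$ gives $\Lambda^\sharp(df_k)=X_{f_k}-f_kE$. Substituting,
\[
X_g=\sum_{k=0}^j \partial_kF\,X_{f_k}-\sum_{k=0}^j f_k\,\partial_kF\,E+gE.
\]
This is exactly the claimed identity. The computation uses only linearity and the chain rule, so it holds for any bivector $\Lambda$ and vector field $E$; the Jacobi conditions are not needed here.

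\textbf{Step 2: the bracket statement.} We use the property $[X_f,X_g]=X_{\{f,g\}}$, which the paper records as holding for all Jacobi manifolds. It is a standard consequence of $[\Lambda,\Lambda]_{SN}=2E\wedge\Lambda$ and $[\Lambda,E]_{SN}=0$, and we assume it as stated earlier. Then
\[
[X_{f_j},X_{f_i}]=X_{\{f_j,f_i\}}=X_{F_{ji}(f_0,\dots,f_j)}.
\]
Applying Step 1 with $F=F_{ji}$ writes this vector field as a $C^\infty(M)$-combination of $X_{f_0},\dots,X_{f_j}$ and $E$, so it lies in $\langle X_{f_0},\dots,X_{f_j},E\rangle$.

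\textbf{Main obstacle.} The only delicate point is the sign convention for $E$ in the contact case. There the paper writes $X_f=\Lambda^\sharp(df)-f\mathcal R$, which is consistent with \eqref{Ha} precisely because $E=-\mathcal R$, as listed in Table~\ref{Jacobi-summary}. I would remark that the proof is carried out purely in terms of $(\Lambda,E)$ and the general formula \eqref{Ha}, so the result applies uniformly to the Poisson, LCS and contact realizations without separate verification.

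The correction term $\bigl(F-\sum_k f_k\,\partial_kF\bigr)E$ is what distinguishes the Jacobi case from the symplectic one. It explains why $E=X_1$ has to be included among the generators of the span.
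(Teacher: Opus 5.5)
Your proposal is correct and follows the paper's route: the chain rule for $d\bigl(F(f_0,\dots,f_j)\bigr)$, $C^\infty(M)$-linearity of $\Lambda^\sharp$, and the substitution $\Lambda^\sharp(df_k)=X_{f_k}-f_kE$ from \eqref{Ha}, with the bracket statement then following from $[X_f,X_g]=X_{\{f,g\}}$, which the paper also takes as given. Your remark that the first identity uses none of the Jacobi conditions is accurate, and the membership conclusion of the second part would hold even if a different convention for $\Lambda^\sharp$ put a minus sign in the bracket relation.
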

\begin{proof}
By definition, $X_h=\Lambda^\sharp(dh)+hE$ for all $h\in C^\infty(M)$. Using the chain rule,
\[
d\bigl(F(f_0,\dots,f_j)\bigr)=\sum_{k=0}^j \frac{\partial F}{\partial u_k}(f_0,\dots,f_j)\,df_k.
\]
Hence
\begin{align*}
X_{F(f_0,\dots,f_j)}
&=\Lambda^\sharp\!\left(\sum_{k=0}^j \frac{\partial F}{\partial u_k}(f_0,\dots,f_j)\,df_k\right)+F(f_0,\dots,f_j)E \\
&=\sum_{k=0}^j \frac{\partial F}{\partial u_k}(f_0,\dots,f_j)\,\Lambda^\sharp(df_k)+F(f_0,\dots,f_j)E \\
&=\sum_{k=0}^j \frac{\partial F}{\partial u_k}(f_0,\dots,f_j)\,(X_{f_k}-f_kE)+F(f_0,\dots,f_j)E,
\end{align*}
which is the desired identity.
\end{proof}

Now we are in a position to introduce the main definition of this section, which generalizes the notion of Poisson $C^\infty$-structure to Jacobi manifolds:
\begin{definition}\label{jacobicinf}
Let $(M,\Lambda,E)$ be a Jacobi manifold of dimension $m$ and let $H\in C^\infty(M)$. An ordered family of $m-2$ functions
\[
\mathcal F=(f_1,\dots,f_{m-2})
\]
is called a \emph{Jacobi $C^\infty$-structure} for the Jacobi Hamiltonian system $(M,H)$ if, taking $f_0=H$, the following conditions hold:

{
\renewcommand{\labelenumi}{\textup{(\arabic{enumi})}}
\begin{enumerate}
\item \textup{Vector field independence.} The Hamiltonian vector fields
\[
X_{f_0},X_{f_1},\dots,X_{f_{m-2}}
\]
are pointwise linearly independent on an open dense subset of $M$.

\item \textup{Triangular bracket closure.} For every pair $j>i\geq 0,$ there exists a smooth function
\(
F_{ji}=F_{ji}(u_0,\cdots,u_j)\in C^\infty(\mathbb{R}^{j+1})
\)
such that
\[
\{f_j,f_i\}=F_{ji}(f_0,f_1,\dots,f_j).
\]

\item \textup{Reeb compatibility.} For every $j>i\geq 0$, the vector field 
\[
\mathcal E_{ji}:=\Bigl(F_{ji}(f_0,\dots,f_j)-\sum_{k=0}^j f_k\,\frac{\partial F_{ji}}{\partial u_k}(f_0,\dots,f_j)\Bigr)\,E
\]
belongs (locally) to $\langle X_{f_0},X_{f_1},\dots,X_{f_j}\rangle$.
\end{enumerate}
}
\end{definition}

\begin{remark}\label{aboutC3}
Condition \textup{(3)} reflects a fundamental difference between Poisson and Jacobi geometry. 
In the Jacobi case, the map $f \mapsto X_f$ is \emph{affine} rather than linear, as can be seen in equation \eqref{Ha}. So condition \textup{(3)} is required to guarantee that the Lie brackets of the Hamiltonian vector fields close in a triangular way. Indeed, according to Proposition~\ref{JacC}, if condition \textup{(2)} holds, then
\[
[X_{f_j},X_{f_i}]=X_{\{f_j,f_i\}}=X_{F_{ji}}\in \langle X_{f_0},\dots,X_{f_j}\rangle
\]
whenever $\mathcal E_{ji}$ lies in the same submodule. 
\end{remark}

\begin{remark}
\textup{(Special cases).}
\begin{itemize}
\item If $E=0$ (Poisson/symplectic case), then $\mathcal E_{ji}=0$ and condition \textup{(3)} is automatically satisfied. In particular, for nondegenerate $\Lambda$ one recovers the Poisson \cinf-structures previously introduced on symplectic manifolds.

\item Assume $E\neq 0$ and $f_1=1$. Since $X_{f_1}=X_1=E$, the vector field $\mathcal E_{ji}$ (which is always proportional to $E$) automatically belongs to $\langle X_{f_0},\dots,X_{f_j}\rangle$ for every $j>i$. Hence, condition \textup{(3)} becomes automatic in this case. Nevertheless, since 
$$
\{f_1,f_0\}=\{1,H\}=E(H),
$$
the closure condition \textup{(2)} implies that $E(H)$ must be a function of $H$ alone, which is a nontrivial constraint on the Hamiltonian.

\item More generally, if for each pair $j>i$ the closure function $F_{ji}=F(u_0,\dots,u_j)$ is $1$-homogeneous in $(u_0,\dots,u_j)$, then Euler's theorem implies
\[
F_{ji}=\sum_{k=0}^j u_k\,\frac{\partial F_{ji}}{\partial u_k},
\]
and hence $\mathcal E_{ji}=0$ for that bracket relation. 
\end{itemize}
\end{remark}

Based on the preceding discussion, we obtain the following theorem.

\begin{theorem}[Exact integration on Jacobi manifolds]\label{thm:jacobi-main}
Let $(M,H)$ be a Jacobi Hamiltonian system admitting a Jacobi
$C^\infty$-structure $\mathcal F=(f_1,\dots,f_{m-2})$. Then the Hamiltonian distribution
\[
\mathcal D_H=\langle X_H\rangle
\]
admits a $C^\infty$-structure and the equations of motion can be integrated
explicitly by solving a
sequence of $m-1$ completely integrable Pfaffian equations.
\end{theorem}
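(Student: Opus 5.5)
The plan is to mirror the proof of Theorem~\ref{thm:main}, with Proposition~\ref{JacC} playing the role of the linear correspondence $df\mapsto X_f$ and condition (3) of Definition~\ref{jacobicinf} absorbing the affine defect. Set $f_0:=H$ and work on the open dense subset $U$ where, by condition (1), $X_{f_0},X_{f_1},\dots,X_{f_{m-2}}$ are pointwise linearly independent. The claim is that the ordered family
\[
(X_{f_1},X_{f_2},\dots,X_{f_{m-2}},R)
\]
is a \cinf-structure for $\mathcal D_H=\langle X_H\rangle$. Here $R$ is any vector field on $U$ that is pointwise linearly independent of $X_{f_0},\dots,X_{f_{m-2}}$; such an $R$ exists locally because these span a rank $m-1$ subbundle of $TM$. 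The count is right, since $\mathcal D_H$ has rank $1$ and a \cinf-structure on an $m$-manifold needs $m-1$ fields.

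The key step is the bracket closure. For $j>i\ge 0$, the Jacobi identity $[X_f,X_g]=X_{\{f,g\}}$ combined with condition (2) gives $[X_{f_j},X_{f_i}]=X_{F_{ji}(f_0,\dots,f_j)}$. Proposition~\ref{JacC} then decomposes this as
\[
X_{F_{ji}(f_0,\dots,f_j)}=\sum_{k=0}^j \frac{\partial F_{ji}}{\partial u_k}(f_0,\dots,f_j)\,X_{f_k}+\mathcal E_{ji}.
\]
The first sum lies in $\langle X_{f_0},\dots,X_{f_j}\rangle$ by construction, and $\mathcal E_{ji}$ lies there locally by condition (3). Hence
\[
[X_{f_j},X_{f_i}]\in\langle X_{f_0},\dots,X_{f_j}\rangle=\mathcal D_{j-1}\oplus\langle X_{f_j}\rangle,
\]
where $\mathcal D_{j-1}=\mathcal D_H\oplus\langle X_{f_1},\dots,X_{f_{j-1}}\rangle$.

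Since every element of $\mathcal D_{j-1}$ is a $C^\infty$-combination $\sum_{i<j} g_i X_{f_i}$, the Leibniz rule gives
\[
\Bigl[X_{f_j},\sum_{i<j} g_iX_{f_i}\Bigr]=\sum_{i<j}\bigl(X_{f_j}(g_i)X_{f_i}+g_i[X_{f_j},X_{f_i}]\bigr),
\]
which lies in $\mathcal D_{j-1}\oplus\langle X_{f_j}\rangle$. Because $X_{f_j}\notin\mathcal D_{j-1}$ by independence, $X_{f_j}$ is a \cinf-symmetry of $\mathcal D_{j-1}$ for each $1\le j\le m-2$. The final field $R$ is admissible by Remark~\ref{rem:last}: the distribution $\mathcal D_{m-2}\oplus\langle R\rangle$ is all of $TU$, which is trivially involutive. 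Theorem~\ref{TZ} then yields integration of $\mathcal D_H$ by $m-1$ completely integrable Pfaffian equations, with the forms given by \eqref{formscontract} using any volume form on $M$.

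I expect the main obstacle to be the local nature of condition (3), which says $\mathcal E_{ji}$ belongs only locally to the submodule. I would handle this by stating the \cinf-structure on a neighbourhood of each point of $U$, consistent with the paper's convention that all integration statements are local near regular points. There the coefficients expressing $\mathcal E_{ji}$ in terms of $X_{f_0},\dots,X_{f_j}$ are smooth, because a smooth section of a subbundle spanned by independent fields has smooth coefficients; this can be seen via Cramer's rule on a local frame. A secondary point to check is that Theorem~\ref{TZ} does not need a symplectic volume form, so the Pfaffian formula of Theorem~\ref{thm:pfaffinaformula} is not used; an arbitrary local volume form suffices.
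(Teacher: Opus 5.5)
Your proposal is correct and follows the paper's proof essentially step for step. With $f_0=H$, you combine $[X_{f_j},X_{f_i}]=X_{F_{ji}(f_0,\dots,f_j)}$, Proposition~\ref{JacC} and condition (3) to get $[X_{f_j},X_{f_i}]\in\langle X_{f_0},\dots,X_{f_j}\rangle$, complete the family with an independent $R$ via Remark~\ref{rem:last}, and conclude with Theorem~\ref{TZ}; your extra details (the Leibniz-rule extension from generators to all of $\mathcal D_{j-1}$, the Cramer's-rule remark on smoothness of the coefficients of $\mathcal E_{ji}$, and choosing $R$ only locally rather than on all of $U$ as the paper does) are things the paper leaves implicit, and you handle them correctly.
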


\begin{proof}
Set $f_0:=H$, so that $\mathcal{D}_H=\langle X_{f_0}\rangle$.
By condition \textup{(1)} in Definition~\ref{jacobicinf}, the vector fields
\[
X_{f_0},X_{f_1},\dots,X_{f_{m-2}}
\]
are pointwise linearly independent on an open dense subset $U\subset M$.
On $U$, choose any vector field $R$ pointwise linearly independent of them.

We claim that the ordered family
\[
\bigl(X_{f_1},\dots,X_{f_{m-2}},R\bigr)
\]
is a \cinf-structure for $\mathcal{D}_H$ in the sense of Definition~\ref{def1}.
For $1\le j\le m-2$, let
\[
\mathcal{D}_{j-1}:=\mathcal{D}_H\oplus\langle X_{f_1},\dots,X_{f_{j-1}}\rangle
=\langle X_{f_0},X_{f_1},\dots,X_{f_{j-1}}\rangle.
\]
We show that $X_{f_j}$ is a \cinf-symmetry of $\mathcal{D}_{j-1}$.
It suffices to check Lie brackets with the generators $X_{f_i}$, $0\le i\le j-1$.
For such a pair $j>i$, condition \textup{(2)} in Definition~\ref{jacobicinf} gives
\(
\{f_j,f_i\}=F_{ji}(f_0,\dots,f_j)
\)
for some $F_{ji}\in C^\infty(\mathbb{R}^{j+1})$.
Using the basic identity $[X_f,X_g]=X_{\{f,g\}}$ recalled at the beginning of Section~\ref{sec:jacobicinf}, we obtain
\[
[X_{f_j},X_{f_i}]=X_{\{f_j,f_i\}}=X_{F_{ji}(f_0,\dots,f_j)}.
\]
Applying Proposition~\ref{JacC} to $F_{ji}$ yields
\[
X_{F_{ji}(f_0,\dots,f_j)}=\sum_{k=0}^j \frac{\partial F_{ji}}{\partial u_k}(f_0,\dots,f_j)\,X_{f_k}+\mathcal{E}_{ji},
\]
where $\mathcal{E}_{ji}$ is exactly the Reeb term defined in condition \textup{(3)} of Definition~\ref{jacobicinf}.
By that condition, $\mathcal{E}_{ji}\in\langle X_{f_0},\dots,X_{f_j}\rangle$ (locally on $U$), hence
\[
[X_{f_j},X_{f_i}]\in\langle X_{f_0},\dots,X_{f_j}\rangle\subset \mathcal{D}_{j-1}\oplus\langle X_{f_j}\rangle.
\]
Therefore $X_{f_j}$ is a \cinf-symmetry of $\mathcal{D}_{j-1}$ for each $1\le j\le m-2$.

Finally, by Remark~\ref{rem:last}, the last vector field of a \cinf-structure may be taken to be any vector field pointwise linearly independent of the previous ones and of the generators of $\mathcal{D}_H$; thus the above choice of $R$ completes a \cinf-structure for $\mathcal{D}_H$ on $U$.
Applying Theorem~\ref{TZ} yields the explicit (local) integration of the equations of motion by solving a sequence of $m-1$ completely integrable Pfaffian equations.
\end{proof}

Below, we provide a detailed discussion of several specific cases of Jacobi manifolds: Poisson manifolds, LCS manifolds, and contact manifolds, and obtain results analogous to Theorem \ref{thm:jacobi-main} for each case.

\subsection{Poisson Hamiltonian systems}

Let $(M,\Lambda)$ be a Poisson manifold of dimension $m$. This corresponds to the
Jacobi structure $(\Lambda,E)=(\Lambda,0)$.
\begin{theorem}\label{Pio}
Let $(M,\Lambda)$ be a $m$-dimensional Poisson manifold, and\linebreak[4] let
$\mathcal{F}=(f_1,\dots,f_{m-2})$ be a Jacobi $C^\infty$-structure of the Hamiltonian system $(M,H)$.
 Then the Hamiltonian distribution
\[
\mathcal{D}_H=\langle X_H\rangle
\]
admits a \cinf-structure generated by
\[
\{X_{f_1},X_{f_2},\dots,X_{f_{m-2}},R\},
\]
where $R$ is an arbitrary vector field, pointwise linearly independent of $\{X_{f_i}\}$.

Consequently, the equations of motion can be integrated explicitly by solving a
sequence of $m-1$ completely integrable Pfaffian equations.
\end{theorem}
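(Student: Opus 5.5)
The plan is to derive Theorem~\ref{Pio} as the special case $E=0$ of Theorem~\ref{thm:jacobi-main}. For completeness, I will also recheck the $C^\infty$-structure property directly, since on a Poisson manifold the map $f\mapsto X_f=\Lambda^\sharp(df)$ is $\mathcal{C}^\infty$-linear in $df$. I set $f_0:=H$ and $\mathcal{D}_H=\langle X_{f_0}\rangle$. By condition (1) of Definition~\ref{jacobicinf}, the fields $X_{f_0},\dots,X_{f_{m-2}}$ are pointwise linearly independent on an open dense set $U$. On $U$ (shrinking locally if necessary) I choose any vector field $R$ independent of them, as allowed by Remark~\ref{rem:last}. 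This completes a local frame of $TM$ by $m$ vector fields.

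The key step is the triangular bracket relation. For $j>i\ge 0$, condition (2) gives $\{f_j,f_i\}=F_{ji}(f_0,\dots,f_j)$. Since $[X_f,X_g]=X_{\{f,g\}}$ holds on any Poisson manifold, Proposition~\ref{JacC} with $E=0$ yields
\[
[X_{f_j},X_{f_i}]=X_{F_{ji}(f_0,\dots,f_j)}=\sum_{k=0}^j \frac{\partial F_{ji}}{\partial u_k}(f_0,\dots,f_j)\,X_{f_k}.
\]
Here the Reeb term $\mathcal{E}_{ji}$ vanishes identically, so condition (3) is automatic. Consequently $[X_{f_j},X_{f_i}]\in\mathcal{D}_{j-1}\oplus\langle X_{f_j}\rangle$, where $\mathcal{D}_{j-1}=\langle X_{f_0},\dots,X_{f_{j-1}}\rangle$. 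By $\mathcal{C}^\infty$-bilinearity up to derivative terms, namely $[X_{f_j},gY]=X_{f_j}(g)Y+g[X_{f_j},Y]$, checking generators suffices. Independence gives $X_{f_j}\notin\mathcal{D}_{j-1}$, so $X_{f_j}$ is a $C^\infty$-symmetry of $\mathcal{D}_{j-1}$ for $1\le j\le m-2$. The last field $R$ is covered by Remark~\ref{rem:last}: once the previous distribution has rank $m-1$, adding $R$ gives all of $TM$, which is trivially involutive.

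With the $C^\infty$-structure $(X_{f_1},\dots,X_{f_{m-2}},R)$ in hand, Theorem~\ref{TZ} (with $k=1$) gives integration by $m-1$ completely integrable Pfaffian equations. The explicit one-forms come from \eqref{formscontract} using any volume form on $U$.

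The only point needing care is that $\Lambda$ may be degenerate. Functional independence of the $f_i$ therefore no longer implies independence of their Hamiltonian fields, and for the same reason $H$ must not be a Casimir. This is exactly why condition (1) is phrased in terms of vector fields. I will stress that the hypothesis is used precisely there, and that nondegeneracy is never needed elsewhere, because the argument uses only the homomorphism property and linearity of $\Lambda^\sharp$.
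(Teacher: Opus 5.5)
Your proposal is correct and follows essentially the paper's route. The paper just says the proof is similar to that of Theorem~\ref{thm:main}: the chain rule applied to $\{f_j,f_i\}=F_{ji}(f_0,\dots,f_j)$ gives triangular closure of the $X_{f_k}$, the family is completed by $R$ via Remark~\ref{rem:last}, and Theorem~\ref{TZ} is applied. That is exactly the $E=0$ specialization of Theorem~\ref{thm:jacobi-main} that you carry out, and your point that pointwise independence of the Hamiltonian fields must come from condition (1), since functional independence no longer suffices when $\Lambda$ is degenerate, is a correct refinement of the argument the paper refers to.
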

\begin{proof}
  The proof is similar to that of Theorem \ref{thm:main}.
\end{proof}

\begin{example}[Jacobi $C^\infty$-structure on a three-dimensional Poisson manifold]
Consider the manifold \(M = \mathbb{R}^3\) with coordinates \((x, y, z)\).  
Define the Poisson tensor
\[
\Lambda = \frac{\partial}{\partial x} \wedge \frac{\partial}{\partial y}.
\]
Take the Hamiltonian \(H = x + y\) and let  
\(f_1 = e^{\frac{x^2 - y^2}{2}}\).  
A direct computation gives the Hamiltonian vector fields
\[
X_H = -\frac{\partial}{\partial x} +\frac{\partial}{\partial y}, 
\qquad
X_{f_1} =ye^{\frac{x^2-y^2}{2}}\frac{\partial}{\partial x} +xe^{\frac{x^2-y^2}{2}}\frac{\partial}{\partial y},
\]
and their Poisson bracket
\[
\{H, f_1\} = X_H(f_1) = -e^{\frac{x^2-y^2}{2}}(x+y) = -f_1\cdot H.
\]
Hence \(\mathcal{F} = (f_1)\) satisfies the triangular closure condition and constitutes a Jacobi \(C^\infty\)-structure of $(M,H)$.  

Now choose the vector field \(R := \dfrac{\partial}{\partial z}\).  
The Hamiltonian distribution \(\mathcal{D}_H = \langle X_H \rangle\) then admits a \(C^\infty\)-structure generated by
\[
\{X_{f_1},\, R\} = \Big\{ye^{\frac{x^2-y^2}{2}}\frac{\partial}{\partial x} +xe^{\frac{x^2-y^2}{2}}\frac{\partial}{\partial y},\; \frac{\partial}{\partial z}\Big\}.
\]
By Theorem~\ref{Pio}, the equations of motion defined by \(X_H\) can therefore be integrated explicitly. In fact, taking a volume form $\Omega=dx\wedge dy\wedge dz$, then we can calculate that
\begin{align*}
\omega_2:=X_{f_1}\contract X_H\contract\Omega=-e^{\frac{x^2-y^2}{2}}(x+y)dz,\quad \omega_1=R\contract X_H\contract\Omega=dy+dx.
\end{align*}
It is clear that 
$$dz\wedge\omega_2=0,\quad X_H\contract\omega_2=0$$
thus that $I_1=z$ is a first integral of $X_H.$ Moreover, the restriction of the 1-form $\omega_1$ to the set $\Sigma_c=\{(x,y,z)\in\mathbb R^3\;|\;z=c\}$ still has the form $\omega_1|_{\Sigma_c}=dy+dx$. Let $I_2=x+y,$ we know that 
$$dI_2\wedge\omega_1|_{\Sigma_c}=0, $$
which means that $I_2$ is also a first integral of $X_H.$ So, $X_H$ is completely integrable in $\mathbb R^3.$

\end{example}

\subsection{Locally conformally symplectic systems}

Let $(M,\omega,\theta)$ be a locally conformally symplectic manifold of dimension
$2n$, with $d\omega=\theta\wedge\omega$ \cite{Zhao5,Azuaje}. The inverse bivector $\Lambda=\omega^{-1}$
defines a Jacobi structure $(\Lambda,E)$ with
\[
E=\Lambda^\sharp(\theta).
\]
Hamiltonian vector fields take the form
\[
X_f=\Lambda^\sharp(df)+fE.
\]

According to Theorem~\ref{thm:jacobi-main}, we can state the following result for Hamiltonian systems on LCS manifolds:
\begin{theorem}\label{LCSC}
A Hamiltonian system on a 2$n$-dimensional locally conformally symplectic manifold that admits a Jacobi $C^\infty$-structure is exactly integrable by solving a
sequence of $2n-1$ completely integrable Pfaffian equations, requiring only $2n-2$ additional functions beyond the Hamiltonian itself.
\end{theorem}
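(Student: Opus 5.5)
The plan is to obtain Theorem~\ref{LCSC} as a direct specialization of Theorem~\ref{thm:jacobi-main}, taking $m=2n$. First I will confirm that an LCS manifold $(M,\omega,\theta)$ with $\Lambda=\omega^{-1}$ and $E=\Lambda^\sharp(\theta)$ is a Jacobi manifold in the sense of Definition~\ref{def:jac}. This needs $[\Lambda,E]_{SN}=0$ and $[\Lambda,\Lambda]_{SN}=2E\wedge\Lambda$, and both follow from $d\omega=\theta\wedge\omega$ together with $d\theta=0$. The quickest route is local. Since $\theta$ is closed, locally $\theta=d\sigma$, and then $e^{-\sigma}\omega$ is symplectic. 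The Jacobi structure is the conformal change of that Poisson structure by the factor $e^{\sigma}$, and such a change is again Jacobi (a standard fact I would cite rather than recompute). Sign conventions for $E$ relative to $Z_\theta$ only have to match the table, because the conclusion does not depend on them.

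Next, I will use the fact recalled at the start of Section~\ref{sec:jacobicinf} that $[X_f,X_g]=X_{\{f,g\}}$ holds for the LCS Hamiltonian vector fields $X_f=\Lambda^\sharp(df)+fE$. Then I will apply the hypothesis that $(M,H)$ admits a Jacobi $C^\infty$-structure $\mathcal F=(f_1,\dots,f_{2n-2})$. This gives the independence of $X_{f_0},\dots,X_{f_{2n-2}}$ on an open dense set, the triangular closure (2), and the Reeb compatibility (3). With these, Theorem~\ref{thm:jacobi-main} applies directly. The ordered family $(X_{f_1},\dots,X_{f_{2n-2}},R)$, where $R$ is any field independent of the others on $U$, is a \cinf-structure for $\mathcal D_H$. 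By Theorem~\ref{TZ}, integrating $\mathcal D_H$ then reduces to $m-k=2n-1$ completely integrable Pfaffian equations. The count of ``$2n-2$ additional functions'' is simply the length of $\mathcal F$; the last field $R$ is arbitrary by Remark~\ref{rem:last} and so requires no additional function.

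The only step with real content is the Jacobi verification for LCS structures, which is where the sign of $E$ and the compatibility $d\omega=\theta\wedge\omega$ enter. If the conformal-change argument turns out to be awkward, I would instead check the two Schouten identities in local Darboux-type coordinates for $e^{-\sigma}\omega$. Beyond that, I will remark that condition (3) is genuinely needed here: because $E\neq0$ generically, the term $\mathcal E_{ji}$ does not vanish in general. It is removed exactly by assumption (3), or automatically when the $F_{ji}$ are $1$-homogeneous.
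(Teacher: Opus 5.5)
Your route is the paper's: it gives no separate proof of Theorem~\ref{LCSC}. It simply specializes Theorem~\ref{thm:jacobi-main} to $m=2n$, so that $\mathcal F$ has $2n-2$ entries and $2n-1$ Pfaffian equations arise, and it takes for granted that an LCS structure is Jacobi. The step you call the only one with real content is, however, set up with the wrong sign. Your claim that the sign only has to match the table, because the conclusion does not depend on it, is also wrong.

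The paper's convention is $\Lambda^\sharp(\alpha)=\Lambda(\alpha,\cdot)$, so that $X_f(g)=\Lambda(df,dg)+fE(g)$. With it, your own conformal-change argument gives the opposite sign. Locally $\theta=d\sigma$ and $\omega=e^{\sigma}\omega_0$ with $\omega_0$ symplectic, so $\Lambda=e^{-\sigma}\Lambda_0$. The Jacobi bracket $\{f,g\}=e^{\sigma}\{e^{-\sigma}f,e^{-\sigma}g\}_0$ then expands to $\Lambda(df,dg)+fE(g)-gE(f)$ with
\[
E=\Lambda_0^\sharp\bigl(d e^{-\sigma}\bigr)=-e^{-\sigma}\Lambda_0^\sharp(\theta)=-\Lambda^\sharp(\theta).
\]
Equivalently, $E=Z_\theta$ with $Z_\theta\contract\omega=-\theta$, once $\Lambda^\sharp$ is the inverse of $X\mapsto X\contract\omega$.

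Only the simultaneous flip $(\Lambda,E)\mapsto(-\Lambda,-E)$ is a free convention. If $(\Lambda,E)$ and $(\Lambda,-E)$ were both Jacobi, then $E\wedge\Lambda=0$, and for nondegenerate $\Lambda$ in dimension $2n\ge 4$ this forces $E=0$. So whenever $\theta\not\equiv 0$ and $2n\ge4$, the pair $(\omega^{-1},\Lambda^\sharp(\theta))$ you propose to verify is not Jacobi, whichever sign you give to $\omega^{-1}$.

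This is not cosmetic, because the whole argument rests on $[X_f,X_g]=X_{\{f,g\}}$, and that identity fails with the wrong sign. The paper's own LCS example, which uses $E=\Lambda^\sharp(\theta)$, shows it. With $\Lambda=e^{-x}(\partial_x\wedge\partial_y+\partial_w\wedge\partial_z)$ and $E=e^{-x}\partial_y$, one finds $[X_w,X_z]=2e^{-2x}\partial_y$. On the other hand $\{w,z\}=e^{-x}$ and $X_{e^{-x}}=0$.

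With $E=-\Lambda^\sharp(\theta)$ your verification goes through. The rest of your argument is then correct as written: conditions (1)--(3), the \cinf-structure $(X_{f_1},\dots,X_{f_{2n-2}},R)$, Theorem~\ref{TZ}, and the count of equations and functions.
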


\begin{example}[Jacobi $C^\infty$-structure on a 4-dimensional LCS manifold]
\label{ex:LCS-Jacobi-Cinf}

Consider the manifold $M = \mathbb{R}^4$ with coordinates $(x, y, w, z)$.
Define the 2-form and the closed 1-form
\[
\omega = e^{x}\big( dx \wedge dy + dw \wedge dz \big), \qquad 
\theta = dx .
\]
A direct computation shows $d\omega = \theta \wedge \omega$; hence $(M,\omega,\theta)$ is a locally conformally symplectic (LCS) manifold of dimension $4$.

\medskip\noindent
\textbf{Associated Jacobi structure.}
The bivector field $\Lambda = \omega^{-1}$ is
\[
\Lambda = e^{-x}\big( \partial_x \wedge \partial_y + \partial_w \wedge \partial_z \big),
\]
and the Reeb vector field equals
\[
E = \Lambda^\sharp(\theta) = e^{-x}\,\frac{\partial}{\partial y}.
\]

For any two smooth functions $f,g$ the Jacobi bracket reads
\begin{equation}\label{eq:Jacobi-bracket-ex}
\{f,g\}=e^{-x}\!\Bigl[
\frac{\partial f}{\partial x}\frac{\partial g}{\partial y}
-\frac{\partial f}{\partial y}\frac{\partial g}{\partial x}
+\frac{\partial f}{\partial w}\frac{\partial g}{\partial z}
-\frac{\partial f}{\partial z}\frac{\partial g}{\partial w}
+f\frac{\partial g}{\partial y}-g\frac{\partial f}{\partial y}\Bigr].
\end{equation}

\medskip\noindent
\textbf{Jacobi $C^\infty$-structure.}
According to the definition, for the Hamiltonian $H=e^xy$ we choose $m-2=2$ additional functions
\[
f_1 = 1,\qquad f_2 = z.
\]
Let $f_0=H,$ using \eqref{eq:Jacobi-bracket-ex} one obtains the non‑zero brackets
\begin{align}
\{f_1,f_0\} &= \{1,\, e^x y\} \nonumber \\
            &= e^{-x}\!\left( \frac{\partial 1}{\partial x}\frac{\partial (e^x y)}{\partial y}
            - \frac{\partial 1}{\partial y}\frac{\partial (e^x y)}{\partial x}
            + \frac{\partial 1}{\partial w}\frac{\partial (e^x y)}{\partial z}
            - \frac{\partial 1}{\partial z}\frac{\partial (e^x y)}{\partial w}  + 1\cdot\frac{\partial (e^x y)}{\partial y}
            - e^x y\cdot\frac{\partial 1}{\partial y} \right) \nonumber \\
            &= e^{-x}(e^x) = 1=f_1. \label{eq:bracket13calc}\\
\{f_2,f_0\} &= \{z,\, e^x y\} \nonumber \\
            &= e^{-x}\!\left( \frac{\partial z}{\partial x}\frac{\partial (e^x y)}{\partial y}
            - \frac{\partial z}{\partial y}\frac{\partial (e^x y)}{\partial x}
            + \frac{\partial z}{\partial w}\frac{\partial (e^x y)}{\partial z}
            - \frac{\partial z}{\partial z}\frac{\partial (e^x y)}{\partial w} + z\frac{\partial (e^x y)}{\partial y}
            - e^x y\frac{\partial z}{\partial y} \right) \nonumber \\
            &= e^{-x}\bigl(  z e^x\bigr) =  z=   f_2. \label{eq:bracket23calc}\end{align}
\begin{align}
            \{f_1,f_2\} &= \{1,\, z\} \nonumber \\
            &= e^{-x}\!\left( \frac{\partial 1}{\partial x}\frac{\partial z}{\partial y}
            - \frac{\partial 1}{\partial y}\frac{\partial z}{\partial x}
            + \frac{\partial 1}{\partial w}\frac{\partial z}{\partial z}
            - \frac{\partial 1}{\partial z}\frac{\partial z}{\partial w}+ 1\cdot\frac{\partial z}{\partial y}
            - z\cdot\frac{\partial 1}{\partial y} \right) = 0. \label{eq:bracket12calc}
\end{align}
Thus there exist constants $a_{ij}^{k}\in\mathbb{R}$ such that
$\{f_i,f_j\}=\sum_{k=0}^{2}a_{ij}^{k}f_k$; explicitly
\[
a_{10}^{1}=-a_{01}^1=1,\quad a_{20}^{2}=-a_{02}^2=1,
\]
and all remaining $a_{ij}^{k}=0$.

Hence $(f_1,f_2)$ is a genuine Jacobi $C^\infty$-structure on the LCS manifold $(M,\omega,\theta)$. Hence, by Theorem \ref{LCSC}, we know that $X_H$ can be integrated explicitly. In fact, we can calculate that
\begin{align*}
X_H&=2y\frac{\partial}{\partial y}-\frac{\partial}{\partial x},\quad X_{f_1}=E=e^{-x}\frac{\partial}{\partial y},\quad X_{f_2}=-e^{-x}\frac{\partial}{\partial w}+ze^{-x}\frac{\partial}{\partial y}.
\end{align*}
In fact, by using
$$X_H=\Lambda^\sharp(dH)+HE,$$ we know that
\begin{align*}
    X_H&=\big(e^{-x}( \partial_x \wedge \partial_y + \partial_w \wedge \partial_z )\big)\contract dH+HE\\
    &=e^{-x}(\partial_x(H)\partial_y-\partial_y(H)\partial_x+\partial_w(H)\partial_z-\partial_z(H)\partial_w)+He^{-x}\frac{\partial}{\partial y}\\
    &=2y\frac{\partial}{\partial y}-\frac{\partial}{\partial x},
\end{align*}
similarly, we can get
$$X_{f_1}=E=e^{-x}\frac{\partial}{\partial y},\quad X_{f_2}=-e^{-x}\frac{\partial}{\partial w}+ze^{-x}\frac{\partial}{\partial y}.$$

Taking another vector field $R=\frac{\partial}{\partial z}$ and a volume form $\Omega=dx\wedge dy\wedge dz\wedge dw,$ we can get that
\begin{align*}
\omega_3:&=X_{f_1}\contract X_{f_2}\contract X_H\contract\Omega=e^{-2x}dz, \\ \omega_2:&=R\contract X_{f_1}\contract X_H\contract\Omega=e^{-x}dw,\\
\omega_1:&=R\contract X_{f_2}\contract X_H\contract\Omega=e^{-x}dy+2ye^{-x}dx.
\end{align*}
It is clear that 
$$dz\wedge\omega_3=0,\quad X_H\contract\omega_3=0$$
thus that $I_1=z$ is a first integral of $X_H.$ Moreover, the restriction of the 1-form $\omega_2$ to the set $\Sigma_{c_1}=\{(x,y,w,z)\in\mathbb R^4\;|\;z=c_1\}$ still has the form $\omega_2|_{\Sigma_{c_1}}=e^{-x}dw$. Let $I_2=w,$ we know that 
$$dI_2\wedge\omega_2|_{\Sigma_{c_1}}=0, $$
which means that $I_2$ provides also a first integral of $X_H.$ Further, it is easy to know that the restriction of the 1-form $\omega_1$ to the set $\Sigma_{c_1,c_2}=\{(x,y,w,z)\in\mathbb R^4\;|\;z=c_1,w=c_2\}$ still has the form $\omega_1|_{\Sigma_{c_1,c_2}}=e^{-x}dy+2ye^{-x}dx$. Let $I_3=y^{\frac{1}{2}}e^x,$ we can get that 
$$dI_3\wedge\omega_1|_{\Sigma_{c_1, c_2}}=0, $$
which means that $I_3$ is also a first integral of $X_H.$ So, $X_H$ is completely integrable in $\mathbb R^4.$
\begin{remark}
 Function $f_1=1$ is non‑trivial because the Reeb field $E$ does not vanish; indeed $\{f_0,f_1\}=-1\neq0$ shows that the constant function $1$ is not a Casimir, in contrast to the Poisson/symplectic case.
\end{remark}

\end{example}

\subsection{Contact Hamiltonian systems}

Let $(M,\eta)$ be a contact manifold of dimension $2n+1$, with Reeb vector field
$R$. 
One can define an isomorphism of \(C^\infty(M)\)-modules given by
\[b : \mathfrak{X}(M) \ni X \mapsto X\contract d\eta + (X\contract\eta)\eta \in \Omega^1(M).\]
 Reeb vector field \(R\) can be expressed as \(R = b^{-1}(\eta)\). Moreover, to each function \(f \in C^\infty(M)\) one can associate a (contact) Hamiltonian vector field \(X_f\) given by 
\[b(X_f) = df - (Rf + f)\eta.\]
The contact manifold \((M, \eta)\) has a Jacobi structure \((\Lambda, E)\), where \(E = -R\) and the bivector \(\Lambda\) is given by 
\[\Lambda(\alpha, \beta) = -\mathrm{d}\eta(b^{-1}(\alpha), b^{-1}(\beta)).\]
The Jacobi bracket \(\{ \cdot , \cdot \} : C^{\infty}(M) \times C^{\infty}(M) \to C^{\infty}(M)\) is
\[ \{f, g\} = -\mathrm{d}\eta\left(b^{-1}df, b^{-1}dg\right) - fR(g) + gR(f),\]
and Hamiltonian vector
fields  are given by
\[
X_f=\Lambda^\sharp(df)-fR.
\]

According to Theorem~\ref{thm:jacobi-main}, we can state the following result for Hamiltonian systems on contact manifolds:
\begin{theorem}\label{ConC}
A Hamiltonian system on a 2$n+1$-dimensional  contact manifold that admits a Jacobi $C^\infty$-structure is exactly integrable by solving a
sequence of $2n$ completely integrable Pfaffian equations, requiring only $2n-1$ additional functions beyond the Hamiltonian itself.
\end{theorem}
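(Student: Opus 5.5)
The plan is to derive Theorem~\ref{ConC} as a direct specialization of Theorem~\ref{thm:jacobi-main}, in the same way that Theorem~\ref{LCSC} was obtained for the LCS case. The contact manifold $(M,\eta)$ has dimension $m=2n+1$ and carries the Jacobi structure $(\Lambda,E)$ with $E=-R$. Its Hamiltonian vector fields $X_f=\Lambda^\sharp(df)-fR$ are exactly those of Definition~\ref{def:jac}. So a Hamiltonian system $(M,H)$ on $(M,\eta)$ is a Jacobi Hamiltonian system.

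By hypothesis it admits a Jacobi $C^\infty$-structure $\mathcal F=(f_1,\dots,f_{m-2})$. Here $m-2=2n-1$, which already gives the count ``$2n-1$ additional functions beyond the Hamiltonian.''

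The key steps, in order, are as follows.

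First, record the bracket identity $[X_f,X_g]=X_{\{f,g\}}$ for the contact Jacobi bracket. The paper recalls this for all realizations at the start of Section~\ref{sec:jacobicinf}.

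Second, fix $f_0=H$. For $j>i$, use condition (2) of Definition~\ref{jacobicinf} together with Proposition~\ref{JacC}, applied with $E=-R$, to write
\[
[X_{f_j},X_{f_i}]=\sum_{k=0}^{j}\frac{\partial F_{ji}}{\partial u_k}X_{f_k}+\mathcal E_{ji}.
\]
Here $\mathcal E_{ji}$ is a function multiple of the Reeb field.

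Third, invoke condition (3), the Reeb compatibility, to absorb $\mathcal E_{ji}$ into $\langle X_{f_0},\dots,X_{f_j}\rangle$. This shows that each $X_{f_j}$ is a \cinf-symmetry of $\langle X_{f_0},\dots,X_{f_{j-1}}\rangle$.

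Fourth, on the open dense set $U$ where condition (1) holds, complete the family with an arbitrary vector field $R'$. This $R'$ must be pointwise independent of $X_{f_0},\dots,X_{f_{2n-1}}$; it is possible since these span a rank-$2n$ subbundle of the rank-$(2n+1)$ tangent bundle. By Remark~\ref{rem:last}, this yields a \cinf-structure $(X_{f_1},\dots,X_{f_{2n-1}},R')$ for $\mathcal D_H$, which is a rank-one distribution.

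Finally, Theorem~\ref{TZ} gives $m-k=(2n+1)-1=2n$ completely integrable Pfaffian equations. These can be written explicitly via \eqref{formscontract} using any volume form, for instance $\eta\wedge(d\eta)^n$.

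No step is genuinely hard, because everything reduces to Theorem~\ref{thm:jacobi-main}. The only point requiring care is the sign convention $E=-R$. I must check that the contact bracket stated above, $\{f,g\}=-d\eta(b^{-1}df,b^{-1}dg)-fR(g)+gR(f)$, matches the Jacobi bracket $\Lambda(df,dg)+fE(g)-gE(f)$ with $E=-R$. This holds by the given definition of $\Lambda$. I must also check that $\mathcal E_{ji}$ is then proportional to $R$, so that condition (3) is literally the hypothesis needed. I would verify this identification first. If a sign discrepancy appeared, it would only flip the sign of the Reeb term, which leaves membership in the submodule $\langle X_{f_0},\dots,X_{f_j}\rangle$ unaffected.
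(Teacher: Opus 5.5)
Your proposal is correct and follows the paper's route. The paper obtains Theorem~\ref{ConC} directly as the specialization of Theorem~\ref{thm:jacobi-main} to the contact Jacobi structure $(\Lambda,E)=(\Lambda,-R)$ with $m=2n+1$, which gives $m-2=2n-1$ functions and $m-1=2n$ Pfaffian equations; you simply unfold that proof (Proposition~\ref{JacC}, Reeb compatibility, a transverse completing field, Theorem~\ref{TZ}) in the contact setting, and your observation that any sign convention only rescales terms and cannot affect submodule membership is right.
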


    \begin{example}\label{ex:contact-Jacobi-Cinf}
Consider a three-dimensional contact manifold $(\mathbb R^3,\eta)$ with contact form $\eta = dz - y\,dx$ in coordinates $(x,y,z)$. The associated Jacobi structure is given by the Reeb vector field $R = \frac{\partial}{\partial z}$ and the bivector field
\[
\Lambda = \frac{\partial}{\partial x} \wedge \frac{\partial}{\partial y} + y \frac{\partial}{\partial z} \wedge \frac{\partial}{\partial y}.
\]
The Jacobi bracket for functions $f,g \in C^\infty(M)$ is:
\[
\{f,g\} = \Lambda(df,dg) + f R(g) - g R(f).
\]
Explicitly in coordinates:
\[
\{f,g\} = \left(\frac{\partial f}{\partial x}\frac{\partial g}{\partial y} - \frac{\partial f}{\partial y}\frac{\partial g}{\partial x}\right) 
+ y\left(\frac{\partial f}{\partial z}\frac{\partial g}{\partial y} - \frac{\partial f}{\partial y}\frac{\partial g}{\partial z}\right)
+ f\frac{\partial g}{\partial z} - g\frac{\partial f}{\partial z}.
\]

Consider a Hamiltonian function $f_2:=H=z,$ and take the auxiliary function $f_1=1$.

Compute their Jacobi bracket:
\begin{align*}
\{f_1, f_2\} &= \{1, z\} \\
&= \left(\frac{\partial 1}{\partial x}\frac{\partial z}{\partial y} - \frac{\partial 1}{\partial y}\frac{\partial z}{\partial x}\right) 
+ y\left(\frac{\partial 1}{\partial z}\frac{\partial z}{\partial y} - \frac{\partial 1}{\partial y}\frac{\partial z}{\partial z}\right)
+ 1\frac{\partial z}{\partial z} - z\frac{\partial 1}{\partial z} \\
&= (0\cdot 0 - 0\cdot 0) + y(0\cdot 0 - 0\cdot 1) + 1\cdot 1 - z\cdot 0 \\
&= 1.
\end{align*}
Thus we have:
\[
\{f_1, f_2\} = f_1.
\]

It is clear that $R\in <X_H,X_1>=<X_H,R>$. Hence we can see that $(f_1)$ form a Jacobi $C^\infty$-structure, from Theorem \ref{ConC}, we know that the Hamiltonian vector field $X_{H}$ can be integrated explicitly.
In fact, we can calculate that $X_H=y\frac{\partial}{\partial y}-z\frac{\partial}{\partial z}$. 
Taking a vector field $Y=\frac{\partial}{\partial x}$ and a volume form $\Omega=dx\wedge dy\wedge dz$,  we can get that
\begin{align*}
\omega_2:=X_{f_1}\contract X_H\contract\Omega=ydx,\quad \omega_1=Y\contract X_H\contract\Omega=-ydz-zdy.
\end{align*}
It is clear that 
$$dx\wedge\omega_2=0,\quad X_H\contract\omega_2=0$$
thus that $I_2=x$ is a first integral of $X_H.$ Moreover, the restriction of the 1-form $\omega_1$ to the set $\Sigma_c=\{(x,y,z)\in\mathbb R^3\;|\;x=c\}$ still has the form $\omega_1|_{\Sigma_c}=-ydz-zdy$. Let $I_1=yz,$ we know that 
$$dI_1\wedge\omega_1|_{\Sigma_c}=0, $$
which means that $I_1$ is also a first integral of $X_H.$ 
\end{example}

\section{Conclusions}

Poisson $\mathcal C^\infty$-structures provide a constructive framework for the exact integration of Hamiltonian systems on symplectic manifolds: whenever $\dim M - 2$ smooth functions close under the Poisson bracket with the Hamiltonian, the equations of motion can be integrated by a sequence of completely integrable Pfaffian equations. This differs from the Liouville paradigm (which demands $\tfrac{1}{2}\dim M$ independent first integrals in involution and yields a global torus foliation) and is closer in spirit to Mishchenko--Fomenko noncommutative integrability, though the emphasis here is on explicit solvability via Pfaffian equations rather than on global topology. The relationship with the solvable structures of \cite{kresicjuric2025} is complementary: both lead to integration via Pfaffian forms in the Liouville setting, but \cite{kresicjuric2025} harnesses conserved quantities to recover action--angle variables, while the present approach accommodates dynamically evolving generators at the cost of replacing quadratures by completely integrable Pfaffian equations.

The Jacobi $C^\infty$-structure (Section~\ref{sec:jacobicinf}) extends this framework to Jacobi manifolds, covering Poisson, locally conformally symplectic, and contact geometries in a unified way. The key new ingredient is a Reeb compatibility condition, which reduces to the Poisson case when $E=0$. Theorem~\ref{thm:jacobi-main} gives explicit integration via $m-1$ Pfaffian equations; Theorems~\ref{LCSC} and~\ref{ConC} specialize to LCS and contact manifolds, requiring $2n-2$ and $2n-1$ auxiliary functions, respectively. Contact geometry is especially notable as it enables exact integration on odd-dimensional manifolds, entirely outside the scope of classical Liouville theory.

Finally, among finite-dimensional reductions of the Vlasov equation, waterbag distributions are singled out as the natural class whose induced dynamics closes on a finite Poisson algebra, thereby admitting a Poisson $\mathcal C^\infty$-structure and exact Pfaffian integration. More general smooth distribution families fail to produce a compatible finite-dimensional Poisson algebra. Waterbag distributions thus emerge not merely as convenient approximations, but as canonical geometric objects characterized by invariance, symplectic consistency, and integrability.

\bibliographystyle{ieeetr}
\bibliography{references} 

@book{Arnold2,
  author    = {V. I. Arnold},
  title     = {Mathematical Methods of Classical Mechanics},
  edition   = {2nd},
  publisher = {Springer-Verlag},
  address   = {New York},
  year      = {1988}
}

@incollection{Arnold3,
  author    = {V. I. Arnold and A. B. Givental},
  title     = {Symplectic Geometry},
  booktitle = {Dynamical Systems IV},
  series    = {Encyclopaedia of Mathematical Sciences},
  volume    = {4},
  publisher = {Springer-Verlag},
  address   = {Berlin},
  year      = {2001}
}

@book{Bourbaki,
  author    = {N. Bourbaki},
  title     = {Elements of Mathematics: Linear and Multilinear Algebra},
  volume    = {2},
  chapter   = {2},
  publisher = {Addison-Wesley},
  year      = {1973}
}

@article{Cayley,
  author    = {A. Cayley},
  title     = {Sur les d{\'e}terminants gauches},
  journal   = {J. Reine Angew. Math.},
  volume    = {38},
  pages     = {93--96},
  year      = {1849}
}

@book{Vein,
  author    = {R. Vein and P. Dale},
  title     = {Determinants and Their Applications in Mathematical Physics},
  publisher = {Springer-Verlag},
  address   = {New York},
  year      = {1999}
}

@book{Gantmacher,
  author    = {F. R. Gantmacher},
  title     = {The Theory of Matrices},
  volume    = {1},
  publisher = {Chelsea Publishing Company},
  address   = {New York},
  year      = {1959}
}

@article{Zhao1,
  author    = {X. F. Zhao and Y. Li},
  title     = {Conserved Quantities, Symmetries of (Almost-){H}amiltonian and {L}agrangian Systems},
  journal   = {J. Geom. Anal.},
  volume    = {35},
  number    = {9},
  pages     = {268},
  year      = {2025}
}

@article{sardon1,
  author  = {M. de Le{\'o}n and C. Sard{\'o}n},
  title   = {Cosymplectic and contact structures for time-dependent and dissipative {H}amiltonian systems},
  journal = {J. Phys. A: Math. Theor.},
  volume  = {50},
  number  = {25},
  pages   = {255205},
  year    = {2017},
  doi     = {10.1088/1751-8121/aa6f2a}
}

@article{sardon2,
  author  = {O. Esen and M. de Le{\'o}n and C. Sard{\'o}n and M. Zajac},
  title   = {{H}amilton--{J}acobi formalism on locally conformally symplectic manifolds},
  journal = {J. Math. Phys.},
  volume  = {62},
  number  = {3},
  pages   = {032902},
  year    = {2021},
  doi     = {10.1063/5.0037553}
}

@article{Burby2023VariableMoment,
  author       = {J. W. Burby},
  title        = {Variable‐moment fluid closures with Hamiltonian structure},
  journal      = {Sci. Rep.},
  year         = {2023},
  volume       = {13},
  pages        = {18286},
  doi          = {10.1038/s41598-023-45416-5},
  url          = {https://www.nature.com/articles/s41598-023-45416-5},
}

@article{ScovelWeinstein1994,
  author       = {C. Scovel and A. Weinstein},
  title        = {Finite‐dimensional Lie‐Poisson approximations to Vlasov–Poisson equations},
  journal      = {Commun. Pure Appl. Math.},
  year         = {1994},
  volume       = {47},
  number       = {5},
  pages        = {683--709},
  doi          = {10.1002/cpa.3160470505},
  url          = {https://doi.org/10.1002/cpa.3160470505},
}

@article{sardon3,
  author  = {O. Esen and M. de Le{\'o}n and M. Lainz and C. Sard{\'o}n and M. Zaj{\'a}c},
  title   = {Reviewing the geometric {H}amilton--{J}acobi theory concerning {J}acobi and {L}eibniz identities},
  journal = {J. Phys. A: Math. Theor.},
  volume  = {55},
  number  = {40},
  pages   = {403001},
  year    = {2022},
  doi     = {10.1088/1751-8121/ac8c2e}
}

@misc{kresicjuric2025,
      title={Solvable Structures for Hamiltonian Systems},
      author={S. Kresic-Juric and C. Muriel and A. Ruiz},
      year={2025},
      eprint={2504.02189},
      archivePrefix={arXiv},
      primaryClass={math-ph},
      url={https://arxiv.org/abs/2504.02189},
}

@Article{pancinf-sym,
  author   = {A. J. Pan-Collantes and A. Ruiz and C. Muriel and J. L. Romero},
  title    = {$\mathcal{C}^{\infty}$-symmetries of distributions and integrability},
  journal  = {J. Differ. Equ.},
  year     = {2023},
  volume   = {348},
  pages    = {126-153},
  issn     = {0022-0396},
  doi      = {https://doi.org/10.1016/j.jde.2022.11.051},
  url      = {https://www.sciencedirect.com/science/article/pii/S0022039622006982},
}

@Article{pancinf-struct,
  author    = {A. J. Pan-Collantes and A. Ruiz and C. Muriel and J. L. Romero},
  title     = {$\mathcal{C}^{\infty}$-structures in the integration of involutive distributions},
  journal   = {Phys. Scr.},
  year      = {2023},
  volume    = {98},
  number    = {8},
  pages     = {085222},
  month     = {jul},
  doi       = {10.1088/1402-4896/ace403},
  fjournal  = {Phys. Scr.},
  publisher = {IOP Publishing},
  url       = {https://dx.doi.org/10.1088/1402-4896/ace403},
}

@Article{pan23integration,
  author    = {A. J. Pan-Collantes and C. Muriel and A. Ruiz},
  title     = {Integration of differential equations by $\mathcal{C}^{\infty}$-structures},
  journal   = {Mathematics},
  year      = {2023},
  volume    = {11},
  number    = {18},
  pages     = {3897},
  publisher = {MDPI},
}

@article{Mishchenko1978Euler,
  author  = {A. S. Mishchenko and A. T. Fomenko},
  title   = {{E}uler Equations on Finite-Dimensional {L}ie Groups},
  journal = {Math. USSR-Izv.},
  year    = {1978},
  volume  = {12},
  number  = {2},
  pages   = {371--389},
  doi     = {10.1070/IM1978v012n02ABEH001859}
}

@article{Mishchenko1978Generalized,
  author  = {A. S. Mishchenko and A. T. Fomenko},
  title   = {Generalized {L}iouville Method of Integration of {Hamiltonian} Systems},
  journal = {Funct. Anal. Appl.},
  year    = {1978},
  volume  = {12},
  number  = {2},
  pages   = {113--121},
  doi     = {10.1007/BF01076254}
}

@article{Liouville,
    author  = {J. Liouville},
    title   = {Note sur l'int{\'e}gration des {\'e}quations diff{\'e}rentielles de la dynamique},
    journal = {J. Math. Pures Appl.},
    volume  = {20},
    pages   = {137--138},
    year    = {1855}
}

@article{Fuchssteiner,
    author = {B. Fuchssteiner},
    title = {Application of hereditary symmetries to nonlinear evolution equations},
    journal = {Nonlinear Anal.},
    volume={3},
    pages={849–862},
    year = {1979}
}

@article{Magri,
  title     = {A simple model of the integrable Hamiltonian equation},
  author    = {F. Magri},
  journal   = {J. Math. Phys.},
  volume    = {19},
  number    = {5},
  pages     = {1156--1162},
  year      = {1978},
  publisher = {American Institute of Physics}
}

@article{Olver,
  title     = {Evolution equations possessing infinitely many symmetries},
  author    = {P. J. Olver},
  journal   = {J. Math. Phys.},
  volume    = {18},
  number    = {6},
  pages     = {1212--1215},
  year      = {1977},
  publisher = {American Institute of Physics}
}

@article{Azuaje2,
  title     = {{L}ie integrability by quadratures for symplectic, cosymplectic, contact and cocontact {H}amiltonian systems},
  author    = {R. Azuaje},
  journal   = {Rep. Math. Phys.},
  volume    = {93},
  number    = {1},
  pages     = {37--56},
  year      = {2024},
  publisher = {Elsevier}
}

@article{Carinena4,
  title     = {Geometry of {L}ie integrability by quadratures},
  author    = {J. F. Cariñena and F. Falceto and J. Grabowski and M. F. Rañada},
  journal   = {J. Phys. A: Math. Theor.},
  volume    = {48},
  number    = {21},
  pages     = {215206},
  year      = {2015},
  publisher = {IOP Publishing}
}

@article{Grabowska,
  title     = {Solvable {L}ie algebras of vector fields and a {L}ie's conjecture},
  author    = {K. Grabowska and J. Grabowski},
  journal   = {SIGMA},
  volume    = {16},
  pages     = {065},
  year      = {2020},
  publisher = {SIGMA}
}

@article{Kozlov,
  title     = {The {E}uler--{J}acobi--{L}ie integrability theorem},
  author    = {V. V. Kozlov},
  journal   = {Regul. Chaotic Dyn.},
  volume    = {18},
  number    = {4},
  pages     = {329--343},
  year      = {2013},
  publisher = {Pleiades Publishing}
}

@article{Carlet,
  title     = {Bihamiltonian cohomology of {KdV} brackets},
  author    = {G. Carlet and H. Posthuma and S. Shadrin},
  journal   = {Commun. Math. Phys.},
  volume    = {341},
  number    = {3},
  pages     = {805--819},
  year      = {2016},
  publisher = {Springer}
}

@article{Carlet2,
  title     = {The bi-{H}amiltonian cohomology of a scalar {P}oisson pencil},
  author    = {G. Carlet and H. Posthuma and S. Shadrin},
  journal   = {Bull. Lond. Math. Soc.},
  volume    = {48},
  number    = {4},
  pages     = {617--627},
  year      = {2016},
  publisher = {Wiley Online Library}
}

@article{Dubrovin,
  title     = {Bihamiltonian cohomologies and integrable hierarchies {II}: The tau structures},
  author    = {B. Dubrovin and S.-Q. Liu and Y. Zhang},
  journal   = {Commun. Math. Phys.},
  volume    = {361},
  number    = {2},
  pages     = {467--524},
  year      = {2018},
  publisher = {Springer}
}

@article{Lorenzoni2,
  title     = {A cohomological construction of integrable hierarchies of hydrodynamic type},
  author    = {P. Lorenzoni and F. Magri},
  journal   = {Int. Math. Res. Not.},
  volume    = {2005},
  number    = {34},
  pages     = {2087--2100},
  year      = {2005},
  publisher = {Oxford University Press}
}

@article{Zhao2,
  title   = {The reduction, first integral and {KAM} tori for $n$-dimensional volume-preserving systems},
  author  = {X. F. Zhao and Y. Li},
  journal = {J. Dyn. Differ. Equ.},
  volume  = {37},
  pages   = {539--558},
  year    = {2025},
  doi     = {10.1007/s10884-023-10293-z}
}

@article{Zhao3,
  title     = {Completely integrable system with {J}acobi multipliers and its {KAM} stability},
  author    = {X. F. Zhao},
  journal   = {Discrete Contin. Dyn. Syst.},
  volume    = {45},
  number    = {4},
  pages     = {1870--1890},
  year      = {2025}
}

@article{Duistermaat1980,
  title     = {On global action-angle coordinates},
  author    = {J. J. Duistermaat},
  journal   = {Commun. Pure Appl. Math.},
  volume    = {33},
  number    = {6},
  pages     = {687--706},
  year      = {1980},
  publisher = {Wiley Online Library}
}

@article{Nehorosev1972,
  title     = {Action-angle variables, and their generalizations},
  author    = {N. N. Nehorošev},
  journal   = {Trans. Moscow Math. Soc.},
  volume    = {26},
  pages     = {181--198},
  year      = {1972},
  note      = {in Russian}
}

@article{Zhao5,
    author  = {X. Zhao},
    title   = {Canonoid transformation and master symmetries of {H}amiltonian systems on locally conformal symplectic manifold},
    journal = {J. Math. Phys.},
    volume  = {66},
    number  = {8},
    pages   = {082704},
    year    = {2025},
    doi     = {10.1063/5.0255345},
    url     = {https://doi.org/10.1063/5.0255345},
}

@article{Azuaje,
    author  = {R. Azuaje and X. Zhao},
    title   = {Canonical and canonoid transformations for {H}amiltonian systems on locally conformal symplectic manifolds},
    journal = {J. Geom. Phys.},
    volume  = {222},
    pages   = {105761},
    year    = {2026},
    doi     = {10.1016/j.geomphys.2026.105761},
    url     = {https://www.sciencedirect.com/science/article/pii/S0393044026000112},
}
\end{document}